\newtheorem*{proof}{Proof:}
\newtheorem{theorem}{Theorem}
\newtheorem{lemma}{Lemma}
\begin{document}

	\title{Sensing-Communication-Computing-Control Closed-Loop Optimization for 6G Unmanned Robotic Systems}

	\author{Xinran~Fang, Chengleyang~Lei,
		Wei~Feng,~\IEEEmembership{Senior Member,~IEEE,} Yunfei~Chen,~\IEEEmembership{Senior Member,~IEEE,}\\ Ming Xiao,~{\IEEEmembership{Senior Member,~IEEE,}} Ning~Ge, and Cheng-Xiang Wang, \emph{Fellow, IEEE}
		\thanks{X.~Fang, C.~Lei, W.~Feng, and N. Ge  are with the State Key Laboratory of Space Network and Communications, Department of Electronic Engineering, Tsinghua University, Beijing 100084, China (e-mail: {fxr20}@mails.tsinghua.edu.cn, {lcly21}@mails.tsinghua.edu.cn, {fengwei}@tsinghua.edu.cn, and  {gening}@tsinghua.edu.cn).

		Y. Chen is with the Department of Engineering, University of Durham,
		DH1 3LE Durham, U.K. (e-mail: yunfei.chen@durham.ac.uk).

		M. Xiao is with the Department of Information Science and Engineering, School of Electrical Engineering and Computer Science, Royal Institute of Technology, Sweden (e-mail: mingx@kth.se).

		C.-X. Wang is with the National Mobile Communications Research Laboratory, School of Information Science and Engineering, Southeast University, Nanjing, 210096, China, and also with the Purple Mountain Laboratories, Nanjing 211111, China (e-mail: chxwang@seu.edu.cn).}}

	\maketitle

	\begin{abstract}
	Rapid advancements in field robots have brought a new kind of cyber physical system (CPS)--unmanned robotic system--under the spotlight. In the upcoming sixth-generation (6G) era, these systems hold great potential to replace humans in hazardous tasks. This paper investigates an unmanned robotic system comprising a multi-functional unmanned aerial vehicle (UAV), sensors, and actuators. The UAV carries communication and computing modules, acting as an edge information hub (EIH) that transfers and processes information. During the task execution, the EIH gathers sensing data, calculates control commands, and transmits commands to actuators—leading to reflex-arc-like sensing-communication-computing-control ($\mathbf{SC}^3$) loops. Unlike existing studies that design $\mathbf{SC}^3$ loop components separately, we take each $\mathbf{SC}^3$ loop as an integrated structure and propose a goal-oriented closed-loop optimization scheme. This scheme jointly optimizes uplink and downlink (UL\&DL) communication and computing within and across the $\mathbf{SC}^3$ loops to minimize the total linear quadratic regulator (LQR) cost. We derive optimal closed-form solutions for intra-loop allocation and propose an efficient iterative algorithm for inter-loop optimization. Under the condition of adequate CPU frequency availability, we derive an approximate closed-form solution for inter-loop bandwidth allocation. Simulation results demonstrate that the proposed scheme achieves a two-tier task-level balance within and across $\mathbf{SC}^3$ loops.
	\end{abstract}

	\begin{IEEEkeywords}
		Closed-loop optimization, goal-oriented communication, sensing-communication-computing-control ($\mathbf{SC}^3$) loop, unmanned robotic system, uplink and downlink (UL\&DL) configuration
	\end{IEEEkeywords}

	\section{Introduction}
	\subsection{Background and Motivation}
	Cyber physical system (CPS) refers to the system that integrates communication, computing, and control to get information from the physical world, perform analysis, and change the physical world \cite{ref1.1,ref1.2}. In recent decades,
	a new kind of CPSs--unmanned robotic system--has received great attention for their human-like capabilities \cite{Jain}. Driven by technologies such as smart computing and cloud-fog architectures, these systems have made significant strides in industrial automation, significantly improving production efficiency  \cite{Lyu3,Jin}. Moreover, recent advancements in field robots have extended unmanned robotic systems from indoor factories to outdoor environments. These systems are now being deployed in hard-to-access areas, replacing humans in dangerous tasks such as disaster rescue \cite{ref3}, oil exploitation, and space exploration \cite{Fang}.
	To fully unlock the potential of unmanned robotic systems, supporting field robots has been identified as an important use case for the sixth-generation (6G) network \cite{nextG}.

	Due to the lack of ground facilities in remote areas, aerial-borne and space-borne platforms such as unmanned aerial vehicles (UAVs) and satellites, are envisioned to provide global seamless coverage in 6G \cite{feng, feng2}. These platforms are agile to carry communication and computing modules, acting as edge information hubs (EIHs) to both transfer and process information \cite{Lei2}. A digital twin can be further integrated into EIHs to enable intelligent decision-making \cite{Jones}.
	During the task execution, sensors collect raw data and upload them to the EIH via the sensor-EIH link. Based on the sensing data, the EIH calculates control commands.
	These commands are then transmitted to the actuators via the EIH-actuator link for actions.	Synergistically, the sensor, the sensor-EIH link, the EIH, the EIH-actuator link, and the actuator form an integrated sensing-communication-computing-control ($\mathbf{SC}^3$) loop. Through effective feedback, the $\mathbf{SC}^3$ loop continuously learns the behavior of the physical system and guides its evolution to the desired direction. From the perspective of biology, the $\mathbf{SC}^3$ loop has great similarity to the reflex arc, as we compared in Fig. \ref{fig1}.
	In biology, it is widely recognized that the presence of the reflex depends on the functional integrity of the reflex arc \cite{bio}. In this sense, it is reasonable to take the $\mathbf{SC}^3$ loop as an integrated structure when we investigate the unmanned robotic system.
	\begin{figure*} [t]
		\centering
		\includegraphics[width=0.8\linewidth]{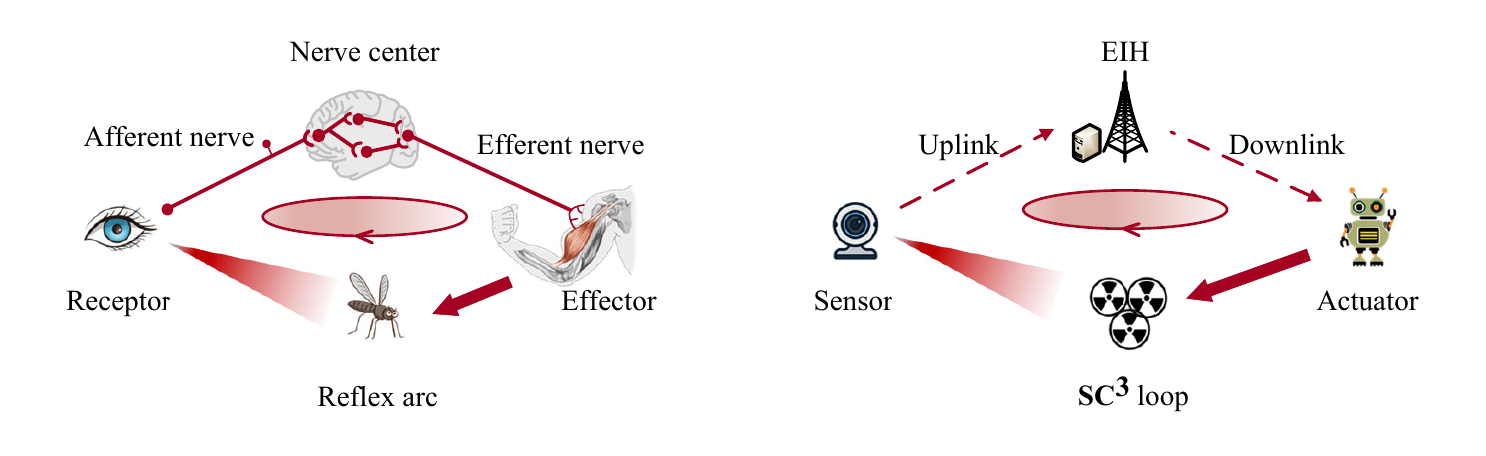}
		\caption{Comparisons between the reflex arc and the $\mathbf{SC}^3$ loop. The reflex arc consists of five parts: receptor, afferent nerve, nerve center, efferent nerve, and effector. By analogy, the $\mathbf{SC}^3$ loop also consists of five parts: sensor, uplink, EIH, downlink, and actuator. The similarity of these two structures motivates us to take the $\mathbf{SC}^3$ loop as an integrated structure and devise the unmanned robotic system from a structured lens.}
		\label{fig1}
	\end{figure*}

	However, due to the heterogeneity of communication, computing, and control, current applications design $\mathbf{SC}^3$-loop components separately. For this reason, the fifth-generation (5G) network positions itself as a communication network and primarily focuses on data transmission. It applies multiple access and duplex  techniques, e.g., orthogonal frequency-division multiple access (OFDMA) and time-division duplexing/frequency-division duplexing (TDD/FDD), to decompose the network into independent links. Although this link-level division brings high capacity for communication, the task-level connections of different components within the $\mathbf{SC}^3$ loop are  disintegrated. The mismatch of the sensor-EIH link, computing, and EIH-actuator link impairs the overall functioning of the $\mathbf{SC}^3$ loop, ultimately degrading the task efficiency of the robotic system.

	Unlike Shannon who separated communication from its served system and regarded communication engineering as ``\emph{reproducing at one point either exactly or approximately a message selected at another point}"  \cite{Shannon}, the creator of cybernetics, Wiener, noted that \emph{``The problems of control engineering and communication engineering were inseparable, and that they centered not around the technique of electrical engineering but around the much more fundamental notion of the message.}" \cite{wiener}.  By understanding communication through Wiener’s lens, we can find that communication is not merely for data transmission but an integral part for task execution.  In this context, we need to shift our focus from  individual communication links to integrated $\mathbf{SC}^3$ loops. As a result, the communication efficiency transcends bit transmission: it involves cooperating with computing and control to achieve good task performance. Building on these insights, we use a structured lens to regard the sensor and actuator within the $\mathbf{SC}^3$ loop as a virtual user \cite{feng1}. The sensor-EIH link and EIH-actuator link are thus the uplink and downlink  (UL\&DL) of this virtual user. On this basis, we consider the goal-oriented closed-loop design for the unmanned robotic system.


	\subsection{Related Studies}

	In the literature, related studies mostly focus on part of the $\mathbf{SC}^3$ loop. In wireless control systems (WCSs), the interplay between communication and control has been extensively studied. In mobile edge computing (MEC), joint optimization of communication and computing has been a central topic. In goal-oriented communication, related studies integrated the task efficiency into the communication design, and the learning-based approaches have become a dominant trend. In the following, we review related studies on WCS, MEC, and goal-oriented communication.

	{\bf{\emph{1) WCS}}}: 
	In WCS, related studies investigated the impact of imperfect communication, such as low data rate and latency, on  control \cite{Park1,Wang}. From the theoretical perspective, Tatikonda \emph{et al.}  analyzed linear discrete-time systems and derived the lower and upper bounds on the data rate required for different control objectives \cite{Tatikonda}. 
	Kostina \emph{et al.} further generalized the work in \cite{Tatikonda} and derived the lower bound of the data rate given the control objective measured by the linear quadratic regulator (LQR) cost \cite{Kostina}. In terms of optimization schemes, Baumann \emph{et al.} and Gatsis \emph{et al.} devised control-aware transmission schemes that the sensor uploads sensing data only when the estimation error surpasses a certain threshold \cite{Baumann,Gatsis}. 
	By taking the LQR cost as the objective, Wang \emph{et al.} optimized the UL \cite{Wang1}, while Lei \emph{et al.} and Fang \emph{et al.} optimized the DL \cite{Lei,Fang}. 
	For ultra-reliable and ultra-low latency communication (URLLC)-supported WCSs, Chang \emph{et al.} jointly optimized communication bandwidth, power, and control convergence rate to maximize the spectrum efficiency (SE) \cite{Chang}, and Yang \emph{et al.}  optimized the UL\&DL transmit power and block length to minimize a new metric named energy-to-control efficiency \cite{Yang}. In addition to above theoretical advancements, researchers also established simulation platforms to testify the control performance under different communication conditions. Bhimavarapu \emph{et al.} proposed a Unobtrusive Latency Tester solution to measure the communication latency and reliability in real-time control \cite{Bhimavarapu}. Lyu \emph{et al.} proposed a novel hardware-in-the-loop simulation platform, which provided reliable test results by accurately stimulating the real wireless environment \cite{Lyu,Lyu2}. Utilizing these platforms, the performance of 5G and Wi-Fi 6 was assessed, which provides valuable insights for selecting communication protocols for specific applications \cite{Bhimavarapu,Lyu,Lyu2}.

	{\bf{2) \emph{MEC}}}: The core idea of MEC is to place computing resources close to mobile devices to support real-time applications. Most studies used joint communication and computing optimization to minimize the task latency \cite{Liu,Ferdouse,Chen}, energy consumption \cite{Cao}\cite{Wen2}, or maximize the utility \cite{Jian, Li,Wang2}. For example, 
	Wen \emph{et al.} considered a TDD-mode MEC system in which each user is allocated a single slot to complete the data uploading, computing, and result downloading. The authors optimized bit, subchannel, and time allocation to minimize energy consumption \cite{Wen2}.  
	Jian \emph{et al.} focused on computing and DL transmission, proposing a joint MEC server-user association, CPU frequency, and bandwidth optimization scheme \cite{Jian}.
%
	 Wang \emph{et al.} considered a joint sensing, communication, and computing framework, where the base station detects surrounding objects and uploads part of computing tasks to the cloud center using integrated sensing and communication (ISAC) signals. The authors optimized the beamformer and CPU frequency to maximize the computing rate \cite{Wang2}.

{\bf\emph{3) Goal-Oriented Communication}}: The core idea of goal-oriented communication is to adapt communication strategies to the specific needs of supported tasks.
For example, Wen \emph{et al.} considered a real-time inference task performed by ISAC devices and an edge server. The authors took the discriminant gain as the objective and jointly optimized the transmit power, time allocation, and quantization bit allocation \cite{Wen}.
Girgis \emph{et al.} proposed a semantic communication-control co-design that trains an encoder at the edge to abstract the low dimensional feature and a decoder at the control center to recover the state and calculate the command  \cite{Girgis}. 
Shao \emph{et al.} utilized the information bottleneck method to maximize the mutual information between the inference result and the encoded feature, while minimizing the mutual information between the raw input data and the encoded feature \cite{Shao}. 
Mostaani \emph{et al.} devised a distributed deep learning scheme for a multi-agent system, and maximized the long-term return by jointly optimizing  communication and control policies  \cite{Mostaani}.


In summary, these studies offer valuable insights into the design of the $\mathbf{SC}^3$ loop. Most of these studies have focused on part of the $\mathbf{SC}^3$ loop, such as UL and computing or DL and control, using indirect task metrics like latency or computing performance as their optimization objectives.
However, for unmanned robotic systems, the reflex-arc-like $\mathbf{SC}^3$ loop is an integrated structure. The loop performance is reflected by the control actions on the physical world rather than the intermediate metrics. There is a lack of work that jointly optimizes the UL, computing, and DL within and across the $\mathbf{SC}^3$ loops from a goal-oriented perceptive. 

\subsection{Main Contributions}
In this paper, we consider an unmanned robotic system formed by a multi-functional UAV, sensors, and actuators. The UAV carries communication and computing modules, with a digital twin simulating the physical process in real time. During the task execution, the UAV acts as an EIH that collects data, calculates commands, and distributes commands to actuators to take actions. Together, the EIH, sensors, and actuators form reflex-arc-like $\mathbf{SC}^3$ loops. Different from current studies that devise $\mathbf{SC}^3$ loop components separately, our study takes the $\mathbf{SC}^3$ loop as an integrated structure and jointly configures UL, computing, and DL within and across the $\mathbf{SC}^3$ loops from a goal-oriented perspective. The main contributions are listed as follows.
\begin{enumerate}
	\item  We investigate an unnamed robotic system formed by multiple reflex-arc-like $\mathbf{SC}^3$ loops. To effectively use limited resources, we propose a goal-oriented closed-loop optimization scheme that jointly optimizes bandwidth, time, and CPU frequency within and across  $\mathbf{SC}^3$ loops, with the objective of minimizing the total LQR cost.
	\item In the intra-loop configuration, we derive optimal closed-form solutions for UL\&DL bandwidth and time, along with the optimal LQR cost. Based on these results, we demonstrate the task-level balance between the UL\&DL, as well as the interchange relationships between communication bandwidth and computing CPU frequency.
	\item In the inter-loop configuration,  we propose an iterative algorithm to optimize the bandwidth and CPU-frequency allocation. Under the condition of adequate CPU frequency availability, we derive an approximate closed-form solution for the inter-loop bandwidth allocation and analyze the allocation principles regarding communication and control parameters.
	\item We conduct comprehensive simulation to validate our findings. We show the superiority of the goal-oriented closed-loop optimization by comparing it with separate schemes and communication-oriented schemes. We show that the proposed scheme achieves a two-tier task-level balance within and across the $\mathbf{SC}^3$ loops, which is crucial for the overall performance of the unmanned robotic system.
\end{enumerate}

\subsection{Organization and Notation}
The rest of this paper is organized as follows. Section \ref{section 2} introduces the model of the unmanned robotic system and the related $\mathbf{SC^3}$ loop. Section \ref{section 3} presents the goal-oriented closed-loop optimization scheme and its solution. Section \ref{section 4} presents simulation results and discussion.  Section \ref{section 5} draws conclusions.

Throughout this paper, vectors, matrices, and sets are represented by bold lowercase letters, bold uppercase letters, and curly uppercase letters, respectively. $\mathbb{R}^{n \times n}$ represents the set of $n\times n$ real matrices, $\mathbf{I}_n$ is the $n\times n$ unit matrix, and $\mathbf{0}_n$ is the $n\times n$ zero matrix. $\lambda(\mathbf{A})$ denotes the eigenvalue of matrix $\mathbf{A}$ and $\det \mathbf{A}$ denotes the determinant of matrix $\mathbf{A}$. The complex Gaussian distribution of zero mean and $\sigma^2$ variance is denoted as $\mathcal{CN}(0,\sigma^2)$. The optimal solution of $x$ is denoted as $(x)^*$.

\section{Unmanned Robotic System and $\mathbf{SC}^3$ Loop Model}
\label{section 2}
\begin{figure} [t]
	\centering
	\includegraphics[width=1\linewidth]{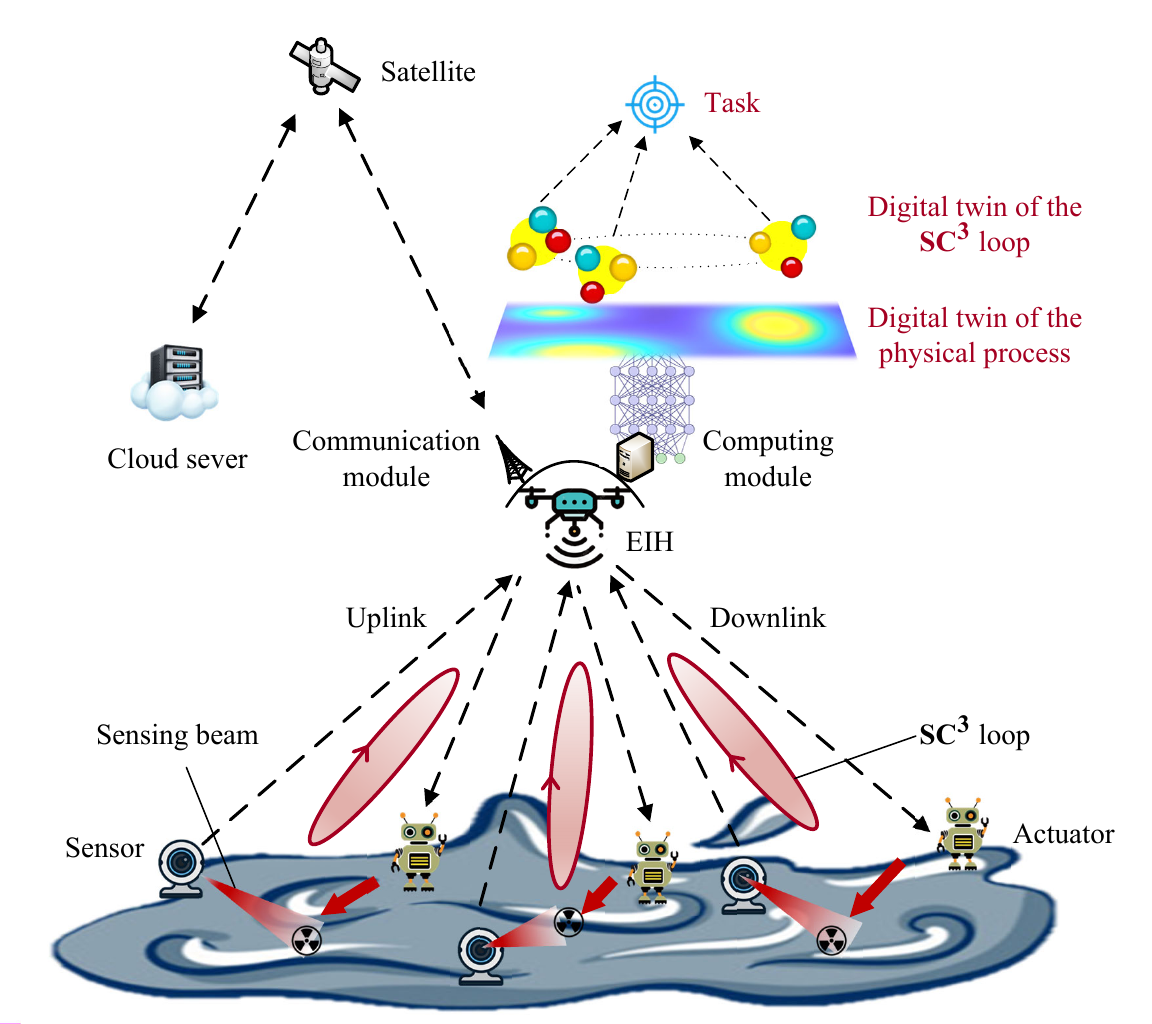}
	\caption{Illustration of the unmanned robotic system. The system comprises a multi-functional UAV and $K$ pairs of sensors and actuators, which synergistically form $K$ $\mathbf{SC}^3$ loops. The UAV carries communication and computing modules, acting as an EIH that transfers and processes information. A digital twin is integrated within it to simulate the physical process in real time.  In this figure, the objects marked in black font represent the physical world and the objects marked in red font represent the digital twin.   }
	\label{model}
\end{figure}

	As shown in Fig. \ref{model}, we consider an unmanned robotic system performing a control-type task, such as radioactive object recycling.  The system consists of $K$ $\mathbf{SC}^3$ loops, which are formed by a multi-functional UAV and $K$ pairs of sensors and actuators. The UAV carries communication and computing modules, serving as an EIH that transfers and processes information. A digital twin can be further integrated into the EIH for simulating the physical process in real time. We regard each pair of sensor and actuator as a virtual user. In this way, the sensor-EIH link and EIH-actuator link are the UL\&DL of the virtual user. During the task execution, sensors collect raw data and upload them to the EIH via the UL. Based on the sensing data, the EIH updates the digital twin and calculates control commands using the reasoning of system behavior.  These commands are transmitted to the actuators via the DL. Then, the actuators take actions.  To ensure the correct operations of the EIH, the satellite provides telemetry, tracking, and control services for the EIH, while a cloud server provides high-level guidelines. In this paper, we focus on the unmanned robotic system in the field and do not consider the remote satellite and cloud server.

	For simplicity, we assume that the controlled systems are linear time-invariant processes. For nonlinear processes, this linear model represents the linearization around the system's working point, which provides a reasonable approximation of its local behavior \cite{Lal}.  Taking the $k$-th $\mathbf{SC}^3$ loop and time index $i$ as an example, the system evolution is described by the following equation:
	\begin{equation}
		\label{sta_elv}
		\mathbf{x}_{k,i+1} = \mathbf{A}_k\mathbf{x}_{k,i}+\mathbf{B}_k\mathbf{u}_{k,i}+\mathbf{v}_{k,i},
	\end{equation}
	where $\mathbf{x}_{k,i}\in\mathbb{R}^{n\times1}$ denotes the system state, $\mathbf{u}_{k,i}\in\mathbb{R}^{n\times1}$ denotes the control action, $\mathbf{v}_{k,i}\in\mathbb{R}^{n\times1}$ denotes the process noise, and $n$ denotes the dimension of the controlled system. The matrices $\mathbf{A}_k\in\mathbb{R}^{n\times n}$ and $\mathbf{B}_k\in\mathbb{R}^{n\times n}$ are determined by the system dynamics, i.e., $\mathbf{A}_k$ quantifies the intrinsic dynamics of the system without external intervention, and $\mathbf{B}_k$ quantifies the effect of control actions on the state evolution. The system to be controlled  is inherently unstable $(\lambda(\mathbf{A}_k)>1)$, and  it can be stabilized by the pair $(\mathbf{A}_k,\mathbf{B}_k)$. We use LQR cost to measure the control performance of the $\mathbf{SC}^3$ loop, which is a weighted summation of system states and control inputs over the entire control process,
	\begin{equation}
		l_k=	\limsup\limits_{N\rightarrow \infty}\mathbb{E} \left[ \sum_{i=1}^{N} \left(\mathbf{x}_{k,i}^\text{T}\mathbf{Q}_k\mathbf{x}_{k,i} +\mathbf{u}_{k,i}^\text{T}\mathbf{R}_k\mathbf{u}_{k,i}\right) \right],
	\end{equation}
	where $l_k$ denotes the LQR cost, $\mathbf{Q}_k\in\mathbb{R}^{n\times n}$ and $\mathbf{R}_k\in\mathbb{R}^{n\times n}$ are weighting matrices that balance the cost between state deviations and control inputs. 

	For communication, we denote the bandwidth, transmission time, and the SE of UL\&DL as $\mathcal{B}^u=\{B_k^u\}_{k=1}^K$, $\mathcal{B}^d=\{B_k^d\}_{k=1}^K$, $\mathcal{T}^u=\{t^u_k\}_{k=1}^K$, $\mathcal{T}^d=\{t^d_k\}_{k=1}^K$, $\mathcal{R}^u=\{r^u_k\}_{k=1}^K$, and $\mathcal{R}^d=\{r^d_k\}_{k=1}^K$, respectively.
	According to the Shannon capacity, $r^u_k $ and $r^d_k$ are calculated as follows
	\begin{equation}
		\label{speed}
		\begin{aligned}
			r^u_k&=\log_2(1+\frac{|h^u_k|^2p^u_k}{\sigma^2})  \ \text{(bits/s/Hz)},\\
			r^d_k&=\log_2(1+\frac{|h^d_k|^2p^d_k}{\sigma^2})  \ \text{(bits/s/Hz)},
		\end{aligned}
	\end{equation}
	where $h^u_k$ and $h^d_k$ denote the channel gain, $p^u_k$ and $p^d_k$ denote the transmit power, $\sigma^2$ is the channel noise variance. 
	In this paper, we assume that the UL\&DL SEs, $r^u_k$ and $r^d_k$, remain constant throughout the control process. This can be achieved through adaptive power control, that the transmit power is dynamically adjusted according to channel conditions to maintain a target signal-to-noise ratio (SNR) \cite{Mallik}. This assumption applies to the bandwidth-constrained systems whose power resources are relatively sufficient to compensate for channel variations. By making this assumption, we focus on the bandwidth and time allocation in this paper.
	The channel gain consists of both small-scale fading and large-scale fading, which is given by,
	\begin{equation}
		\begin{aligned}
		h^u_k&=\beta^u_k\sqrt{g^u_k},\\
		h^d_k&=\beta^d_k \sqrt{g^d_k},\\
		\end{aligned}
	\end{equation}
	where $\beta^u_k$ and $\beta^d_k$ are the small-scale fading, which conforms to the complex Gaussian distribution, and $g^u_k$ and $g^d_k$ are the large-scale fading, which are calculated by the path-loss model,
	\begin{equation}
		\label{channel_gain}
		\begin{aligned}
		g^u_k \ (\text{dB})&=32.4+20\log_2(d^u_k)+20\log_2(f_c), \\
	    g^d_k\ (\text{dB})&=32.4+20\log_2(d^d_k)+20\log_2(f_c), \\
		\end{aligned}
	\end{equation}
	where $d^u_k$ (km) and $d^d_k$ (km) denote the transmission distance of the UL\&DL, and $f_c$ (MHz) denotes the carrier frequency. In each $\mathbf{SC}^3$ cycle, the amount of information transmit via the UL\&DL are given by,
	\begin{equation}
		\label{bit}
		\begin{aligned}
			D^u_k&=B^u_kt^u_kr^u_k \ \text{(bits)},\\
			D^d_k&=B^d_kt^d_kr^d_k \ \text{(bits)}.
		\end{aligned}
	\end{equation}

	For computing, upon receiving sensing data, the EIH processes these data, extracts task-related information, and updates the digital twin. Using the updated digital twin, which simulates system behavior, the EIH calculates control commands.   The information extraction process is described as,
	\begin{equation}
		D^u_k\rightarrow \rho_k D^u_k \ \text{(bits)},
	\end{equation}
	where $\rho_k$ is a proportion parameter. $\rho_k D^u_k$ denotes the  task-related information extracted from the sensing data, which also denotes the information contained in the command. The computing time is calculated by,
	\begin{equation}
		t^{\text{comp}}_k=\frac{\alpha_kD^u_k}{f_k},
	\end{equation}
	where $\alpha_k$ (cycles/bit) denotes the processing difficulty (the required CPU cycles for processing one-bit data), $f_k$ denotes the allocated  CPU frequency for the $k$-th $\mathbf{SC}^3$ loop, and we denote $\mathcal{F}=\{f_k\}_{k=1}^K$.
	In practice, $\alpha_k$ and $\rho_k$ are the parameters of the neural network . A larger network usually provides a more accurate modeling of the physical process, enables a more thorough extraction of task-related information, and outputs more effective control commands, resulting in a higher $\rho_k$. Conversely, the larger network also incurs greater computational complexity, resulting in a higher $\alpha_k$. Therefore, finding a proper precision of the digital twin is crucial for the practical deployment \cite{liu}. Nonetheless, we do not delve into this issue in this paper. The calculated commands are sent to the actuators to guide their actions.
	However, due to the capacity constraints of the DL, the command information may not fully reach the actuators. The information that finally works for the controlled system is jointly determined by the task-related information extracted from the UL and the information successfully transmitted via the DL,
	\begin{equation}
		\label{SC3}
		D^{\mathbf{SC}^3}_k\leqslant \min\{\rho_k D^u_k,D^d_k\} \ \text{(bits)},
	\end{equation}
	where $D^{\mathbf{SC}^3}_k$ is defined as the closed-loop information, which is the information that finally works within one $\mathbf{SC}^3$ cycle. According to \cite{Kostina}, the lower bound of the LQR cost has a direct relationship with the closed-loop information as,
	\begin{equation}\label{f6}
		l_k \geqslant \frac{n N \!\left( \mathbf{v_k}\right)|\det \mathbf{M}_k|^\frac{1}{n}} {2^{\frac{2}{n}(D^{\mathbf{SC}^3}_k-\log_2|\det \mathbf{A}_k|)}-1}+\text{tr}\left( \mathbf{\Sigma}_{\mathbf{v}_k}\mathbf{S}_k\right),
	\end{equation}
	where $N(\mathbf{x})\triangleq\frac{1}{2\pi e}e^{\frac{2}{n}h(\mathbf{x})}$ and $h(\mathbf{x})$ is the differential entropy of $\mathbf{x}$, i.e., $h(\mathbf{x})\triangleq-\int_{\mathbb{R}^n}f_{\mathbf{x}}(x)\log  f_{\mathbf{x}}(x)\mathrm{d}x$, $\mathbf{\Sigma}_{\mathbf{v}_k}$ is the covariance matrix of the process noise, $\log_2|\det\mathbf{A}_k|$ is the intrinsic entropy, and $\mathbf{M}_{k}$ and $\mathbf{S}_k$ are determined by the Riccati equations,
	\begin{equation}\label{Riccati}
		\begin{aligned}
			\mathbf{S}_k & = \mathbf{Q}_k + \mathbf{A}_k^T\left(\mathbf{S}_k- \mathbf{M}_k\right) \mathbf{A}_k,\\
			\mathbf{M}_k & = \mathbf{S}_k^T \mathbf{B}_k \left( \mathbf{R}_k + \mathbf{B}_k \mathbf{S}_k \mathbf{B}_k\right)^{-1} \mathbf{B}_k^\text{T} \mathbf{S}_k.
		\end{aligned}
	\end{equation}
	To ensure the system can be stabilized, the closed-loop information needs to satisfy the following stable condition \cite{Kostina},
	\begin{equation}
		\label{stable}
		D^{\mathbf{SC}^3}_k> \log_2|\det\mathbf{A}_k|.
	\end{equation}
	In addition, the $\mathbf{SC}^3$ loop needs to run within the given cycle time, and we have the following cycle-time constraint,
	\begin{equation}
		\label{cycle-time}
		t^u_k+t^{\text{comp}}_k+t^d_k\leqslant T_k,
	\end{equation}
	where $T_k$ denotes the cycle time for the $k$-th $\mathbf{SC}^3$ loop.
	We can see from \eqref{SC3} and \eqref{f6}, the LQR cost is constrained by the closed-loop information, and the closed-loop information is further determined by the information transmitted via the UL\&DL. In addition, the computing time affects the time available for data transmission, thereby indirectly influencing the LQR cost. Consequently, a co-design of UL, computing, and DL is essential to ensure a good control performance of the $\mathbf{SC}^3$ loop.

	\section{Goal-Oriented Closed-Loop Optimization}
	\label{section 3}
	\subsection{Problem Formulation}
	In this paper, we aim to minimize the total LQR cost of the unmanned robotic system by jointly optimizing bandwidth, CPU frequency, and time within and across the $\mathbf{SC}^3$ loops. The optimization problem is formulated as follows:
		\begin{subequations}
		\begin{align}
			\mbox{(P1)} \ \ &\min\limits_{\mathcal{B}^u,\mathcal{T}^u,\mathcal{F},\mathcal{B}^d,\mathcal{T}^d} \sum_{k=1}^Kl_k \\
			\text{s.t.} \ &l_k \geqslant \frac{n N \!\left( \mathbf{v}_k\right)|\det \mathbf{M}_k|^\frac{1}{n}} {2^{\frac{2}{n}(D^{\mathbf{SC}^3}_k-\log_2|\det \mathbf{A}_k|)}-1}+\text{tr}\left( \mathbf{\Sigma}_{\mathbf{v}_k}\mathbf{S}_k\right), \forall k \label{a1} \\
			&D^{\mathbf{SC}^3}_k\leqslant \min(\rho_k D^u_k,D^d_k), \forall k \label{SC31}\\
			&D^{\mathbf{SC}^3}_k> \log_2|\det \mathbf{A}_k|, \forall k \label{14d}\\
			&D^u_k\leqslant t^u_kB^u_kr^u_k, \forall k \label{ulc} \ \\
			&D^d_k\leqslant t^d_kB^d_kr^d_k, \forall k \label{dlc}\\
			&t^u_k+\frac{\alpha_k D^u_k}{f_k}+t^d_k\leqslant T_k, \forall k \label{clc}\\
			&\sum\limits_{k=1}^K(B^u_k+B^d_k)\leqslant B_{\max}  \\
			&\sum\limits_{k=1}^Kf_k\leqslant f_{\max} \label{f13}\\
			&B^u_k\geqslant 0\ t^u_k\geqslant 0\ f_k\geqslant 0\ B^d_k\geqslant 0\ t^d_k\geqslant 0, \ \forall k,
		\end{align}
	\end{subequations}
	where $B_{\max}$ and $f_{\max}$ denote the maximal bandwidth and CPU frequency, respectively.
	Due to the coupling relationships among different variables, (P1) is a highly complex problem with non-convex constraints \eqref{ulc}, \eqref{dlc} and \eqref{clc}. To address this challenge, we begin by investigating the intra-loop configuration within an $\mathbf{SC}^3$ loop and move to investigate the inter-loop optimization across  multi-$\mathbf{SC}^3$ loops.
	\subsection{Single-$\mathbf{SC}^3$-Loop Optimization}
	For simplicity, the index $k$ is omitted when we consider an $\mathbf{SC}^3$ loop. When $K=1$, the optimization problem becomes
	\begin{subequations}
		\begin{align}
			\mbox{(PA-1)} \ \ &\min\limits_{B^u,t^u,f,B^d,t^d} l \\
			\text{s.t.} \ &l \geqslant \frac{n N \!\left( \mathbf{v}\right)|\det \mathbf{M}|^\frac{1}{n}} {2^{\frac{2}{n}(D^{\mathbf{SC}^3}-\log_2|\det \mathbf{A}|)}-1}+\text{tr}\left( \mathbf{\Sigma}_{\mathbf{v}}\mathbf{S}\right) \label{15b} \\
			&D^{\mathbf{SC}^3}\leqslant \min(\rho D^u,D^d) \label{15c} \\
			&D^{\mathbf{SC}^3}> \log_2|\det \mathbf{A}| \label{15d}\\
			&D^u\leqslant t^uB^ur^u \label{15e} \\
			&D^d\leqslant t^dB^dr^d \label{15f} \\
			&t^u+\frac{\alpha D^u}{f}+t^d\leqslant T \\
			&B^u+B^d\leqslant B_{\max} \\
			&f\leqslant f_{\max}, \label{4-f13}\\
			&B^u\geqslant 0\ t^u\geqslant 0\ f\geqslant 0\ B^d\geqslant 0 \ t^d\geqslant 0.
		\end{align}
	\end{subequations}
	As the lower bound of the LQR cost \eqref{15b} monotonically decreases with the closed-loop information, minimizing the LQR cost is equivalent to maximizing the closed-loop information for an $\mathbf{SC}^3$ loop. In addition, as there are no other $\mathbf{SC}^3$ loops competing for resources, the optimal CPU frequency is its maximum allowable value, i.e., $(f)^*=f_{\max}$.  As a result, the critical focus in the single-$\mathbf{SC}^3$-loop optimization is the UL\&DL configuration,
		\begin{subequations}
		\begin{align}
			\mbox{(PA-2)} \ \ &\max\limits_{B^u,t^u,B^d,t^d} D^{\mathbf{SC}^3} \\
			\text{s.t.} \ \
			&D^{\mathbf{SC}^3}\leqslant \min(\rho D^u,D^d) \label{16b} \\
			&D^u\leqslant t^uB^ur^u \label{16c}\\
			&D^d\leqslant t^dB^dr^d \label{16f}\\
			&t^u+\frac{\alpha D^u}{f_{\max}}+t^d\leqslant T \label{16e} \\
			&B^u+B^d\leqslant B_{\max}\\
			&B^u\geqslant 0\ t^u\geqslant 0\ B^d\geqslant 0 \ t^d\geqslant 0,
		\end{align}
	\end{subequations}
	where the stable condition \eqref{15d} is omitted. We test it after solving (PA-2). If the optimization result, $(D^{\mathbf{SC}^3})^*$ satisfies \eqref{15d}, we calculate the LQR cost according to \eqref{f6}. Otherwise, the controlled system cannot be stabilized, and the LQR cost is infinite.
	\begin{lemma}
		The optimal UL\&DL configuration for an $\mathbf{SC}^3$ loop is to keep a task-level balance, described by the following equation,
		\begin{equation}
			\label{equal}
			\rho (D^u)^*=(D^d)^*.
		\end{equation}
		The optimal UL\&DL time allocation is given by,
		\begin{equation}
			\label{opt}
			\begin{aligned}
				(t^u)^*=\frac{\frac{1}{\rho B^ur^u}}{\frac{1}{\rho B^ur^u}+\frac{\alpha}{\rho f_{\max}}+\frac{1}{B^dr^d}}T \\ (t^d)^*=\frac{\frac{1}{B^dr^d}}{\frac{1}{\rho B^ur^u}+\frac{\alpha}{\rho f_{\max}}+\frac{1}{B^dr^d}}T.
			\end{aligned}
		\end{equation}
	\end{lemma}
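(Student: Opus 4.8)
The plan is to solve (PA-2) by showing that at the optimum all the relevant inequality constraints are tight, then reducing the problem to a single-variable optimization over the time split. First I would argue that at an optimal point \eqref{16c} and \eqref{16f} must hold with equality: if $D^u < t^u B^u r^u$ we could increase $D^u$ (and hence relax the binding term $\rho D^u$ in \eqref{16b}) without violating any constraint, and similarly for $D^d$; likewise the cycle-time constraint \eqref{16e} must be tight, since any slack time could be reallocated to $t^u$ or $t^d$ to strictly increase $D^u$ or $D^d$. So we may substitute $D^u = t^u B^u r^u$, $D^d = t^d B^d r^d$, and $t^{\text{comp}} = \alpha D^u / f_{\max}$, leaving the objective $D^{\mathbf{SC}^3} = \min\{\rho t^u B^u r^u,\ t^d B^d r^d\}$ subject to $t^u + \alpha t^u B^u r^u / f_{\max} + t^d = T$ and $B^u + B^d \le B_{\max}$, with $t^u, t^d, B^u, B^d \ge 0$.

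Next I would establish the task-level balance \eqref{equal}. Fix the bandwidths $B^u, B^d$ and the total transmission-plus-compute budget $T$. The two arguments of the $\min$ are $\rho t^u B^u r^u$, which is increasing in $t^u$, and $t^d B^d r^d$, which is increasing in $t^d$; but the time budget couples them so that increasing $t^u$ forces $t^d$ down (note $t^d = T - t^u(1 + \alpha B^u r^u/f_{\max})$ is strictly decreasing in $t^u$). Hence the first term is strictly increasing and the second strictly decreasing in $t^u$, so the $\min$ is maximized exactly where they cross, i.e. $\rho (D^u)^* = (D^d)^*$. This is a standard bottleneck/water-filling-type argument; I would phrase it cleanly as: the maximum of the minimum of an increasing and a decreasing function is attained at their intersection (which lies in the feasible interval because at $t^u = 0$ the first term is $0 <$ the second, and at the largest feasible $t^u$ the second term is $0 <$ the first).

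With the balance condition in hand, I would solve the resulting linear system for the time allocation. Writing $\rho t^u B^u r^u = t^d B^d r^d$ gives $t^d = \rho t^u B^u r^u / (B^d r^d)$; substituting into the tight cycle-time equation $t^u + \alpha t^u B^u r^u/f_{\max} + t^d = T$ and solving for $t^u$ yields, after dividing numerator and denominator by $\rho B^u r^u$, precisely the expression \eqref{opt} for $(t^u)^*$, and then $(t^d)^*$ follows by symmetry (or by back-substitution). The one point requiring a little care — and the main obstacle — is justifying that the tightness reductions are without loss of generality \emph{simultaneously} with the bandwidth constraint $B^u + B^d \le B_{\max}$ being active: here I would note that Lemma~1 only claims the \emph{form} of the optimal time split for any given feasible $(B^u, B^d)$, so the bandwidth optimization is deferred to the inter-loop analysis and the argument above is self-contained once the bandwidths are treated as parameters. (A secondary subtlety is the degenerate case $B^u = 0$ or $B^d = 0$, where $D^{\mathbf{SC}^3} = 0$ and the stability condition \eqref{15d} fails; such points are non-optimal whenever a strictly positive $D^{\mathbf{SC}^3}$ is achievable, so they can be excluded at the outset.)
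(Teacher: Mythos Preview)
Your proposal is correct and follows essentially the same approach as the paper's proof: both first argue that the capacity and cycle-time constraints are tight at the optimum, then establish the balance $\rho (D^u)^* = (D^d)^*$ by observing that any imbalance can be corrected by shifting time from the non-binding side to the binding side (the paper phrases this as a contradiction argument, you as the ``max of a min of an increasing and a decreasing function is at their crossing''), and finally solve the resulting linear system for $(t^u)^*$ and $(t^d)^*$. Your additional remarks on treating $(B^u,B^d)$ as fixed parameters and excluding the degenerate zero-bandwidth cases are sound but go slightly beyond what the paper records.
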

	\begin{proof}
		See Appendix A.
	\end{proof}
	From \eqref{equal}, we can see that the task-level balance between UL\&DL is to match the task-related information extracted from the raw data transmitted via the UL with the information transmitted via the DL. This balance can be illustrated by the analogy to water pipes. The UL and computing are considered as one pipe, while the DL is another. Just as the maximum water flow through the interconnected pipes is limited by the pipe with the smallest capacity, the optimal solution requires equalizing the capacities of both pipes, as indicated by the equilibrium in \eqref{equal}.

	Based on the \eqref{equal} and \eqref{opt}, we can express the closed-form information as an expression of the bandwidth,
	\begin{equation}
		\begin{aligned}
		D^{\mathbf{SC}^3}&=\min\{\rho (D^u)^*,(D^d)^*\}=\rho (D^u)^*\\
		&=\rho(t^u)^*B^ur^u=\frac{T}{\frac{1}{\rho B^ur^u}+\frac{\alpha}{\rho f_{\max}}+\frac{1}{B^dr^d}}. \label{19}\\
		\end{aligned}
	\end{equation}
	From \eqref{19}, we observe that $D^{\mathbf{SC}^3}$ is monotonically decreasing with $[\frac{1}{\rho B^u r^u} + \frac{\alpha}{\rho f_{\max}} + \frac{1}{B^d r^d}]$. This expression represents the time to transfer or process one-bit task-related information, where $[\frac{1}{\rho B^u r^u}]$, $[\frac{\alpha}{\rho f_{\max}}]$, and $[\frac{1}{B^d r^d}]$ represent the time for UL, computing, and DL, respectively. Therefore, we move to consider the time-minimization problem as follows:
	\begin{subequations}
		\begin{align}
			\mbox{(PA-3)} \ \ \min\limits_{B^u,B^d} \ \ &\frac{1}{\rho B^ur^u}+\frac{\alpha}{\rho f_{\max}}+\frac{1}{B^dr^d} \\
			\text{s.t.} \ \
			&B^u+B^d\leqslant B_{\max} \label{20b} \\
			&B^u\geqslant 0 \ B^d\geqslant 0.
		\end{align}
	\end{subequations}

	\begin{theorem}
		The optimal UL\&DL bandwidth allocation is given by:
		\begin{subequations}
			\label{opt-B}
			\begin{align}
				(B^u)^*&=\frac{\sqrt{r^d}B_{\max}}{\sqrt{\rho r^u}+\sqrt{r^d}} \\
				(B^d)^*&=\frac{\sqrt{\rho r^u}B_{\max}}{\sqrt{\rho r^u}+\sqrt{r^d}}.
			\end{align}
		\end{subequations}
	The optimal closed-loop information and optimal LQR cost are given by:
		\begin{equation}
				(D^{\mathbf{SC}^3})^*=\frac{T}{\frac{1}{B_{\max}r^{\text{comm}}}+\frac{1}{f_{\max}r^{\text{comp}}}}  \label{22a}\\
		\end{equation}
	\begin{equation}
	\label{f15}
	(l)^*= \left\{
	\begin{aligned}
		&+\infty,  \quad\quad \quad\quad\quad\quad \quad\quad  \quad(D_{\mathbf{SC}^3})^*\leqslant \log_2|\det \mathbf{A}|\\
		 &\frac{n N \!\left( \mathbf{v}\right)|\det \mathbf{M}|^\frac{1}{n}} {2^{\frac{2}{n}\bigg((D_{\mathbf{SC}^3})^*-\log_2|\det \mathbf{A}|\bigg)}-1}+\text{tr}\left( \mathbf{\Sigma}_{\mathbf{v}}\mathbf{S}\right), \text{otherwise,} \\
	\end{aligned}
	\right.
	\end{equation}
	where $r^{\text{comm}}$ denotes the SE of the $\mathbf{SC}^3$ loop, referred to as closed-loop SE, and $r^{\text{comp}}$ denotes the computing efficiency (CE):
	\begin{subequations}
		\label{closed-rate}
		\begin{align}
				r^{\text{comm}}&=\frac{\rho r^ur^d}{(\sqrt{\rho r^u}+\sqrt{r^d})^2} \ \text{(bits/s/Hz)}\label{24a}\\
				r^{\text{comp}}&=\frac{\rho}{\alpha} \ \text{(bits/cycle)}.
		\end{align}
	\end{subequations}
	\end{theorem}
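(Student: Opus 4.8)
The plan is to reduce (PA-3) to an unconstrained scalar optimization and solve it with a single application of the Cauchy--Schwarz (or AM--GM) inequality. First I would argue that at the optimum the bandwidth budget constraint \eqref{20b} is tight, i.e.\ $(B^u)^*+(B^d)^*=B_{\max}$: since the objective $\tfrac{1}{\rho B^u r^u}+\tfrac{\alpha}{\rho f_{\max}}+\tfrac{1}{B^d r^d}$ is strictly decreasing in both $B^u$ and $B^d$, any slack in the budget could be assigned to either variable to strictly decrease the objective. The constant middle term $\tfrac{\alpha}{\rho f_{\max}}$ plays no role in the allocation, so the problem collapses to minimizing $g(B^u)=\tfrac{1}{\rho r^u}\cdot\tfrac{1}{B^u}+\tfrac{1}{r^d}\cdot\tfrac{1}{B_{\max}-B^u}$ over $B^u\in(0,B_{\max})$.

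Next I would minimize $g$. Two equivalent routes: (i) set $g'(B^u)=0$, giving $\tfrac{1}{\rho r^u (B^u)^2}=\tfrac{1}{r^d (B_{\max}-B^u)^2}$, hence $\sqrt{r^d}\,B^u=\sqrt{\rho r^u}\,(B_{\max}-B^u)$, which rearranges to \eqref{opt-B}; or (ii) apply Cauchy--Schwarz in the form $\big(\tfrac{a^2}{x}+\tfrac{b^2}{y}\big)(x+y)\ge (a+b)^2$ with $a=1/\sqrt{\rho r^u}$, $b=1/\sqrt{r^d}$, $x=B^u$, $y=B^d$, $x+y=B_{\max}$, with equality iff $x/a = y/b$. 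Either way the minimum value of the two variable terms is $\tfrac{(a+b)^2}{B_{\max}}=\tfrac{1}{B_{\max}}\cdot\tfrac{(\sqrt{\rho r^u}+\sqrt{r^d})^2}{\rho r^u r^d}=\tfrac{1}{B_{\max}\,r^{\mathrm{comm}}}$, using the definition \eqref{24a}. A second-derivative check (or convexity of $1/x$ on the positive axis) confirms this stationary point is the global minimizer. Substituting the optimal denominator $\tfrac{1}{B_{\max}r^{\mathrm{comm}}}+\tfrac{\alpha}{\rho f_{\max}}=\tfrac{1}{B_{\max}r^{\mathrm{comm}}}+\tfrac{1}{f_{\max}r^{\mathrm{comp}}}$ back into \eqref{19} yields \eqref{22a}.

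Finally, for the LQR cost I would invoke the earlier observations: by Lemma~1 the optimal UL\&DL time split achieves $\rho (D^u)^*=(D^d)^*$ so that $(D^{\mathbf{SC}^3})^*$ indeed equals the expression in \eqref{19}/\eqref{22a}; and since the right-hand side of \eqref{15b} is monotonically decreasing in $D^{\mathbf{SC}^3}$, the constraint \eqref{15b} is active at the optimum, so $(l)^*$ equals that right-hand side evaluated at $(D^{\mathbf{SC}^3})^*$ — provided the stable condition \eqref{15d} holds. If $(D^{\mathbf{SC}^3})^*\le \log_2|\det\mathbf{A}|$, the system is not stabilizable and $(l)^*=+\infty$; this case split gives exactly \eqref{f15}.

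I do not expect a serious obstacle here: the optimization is a textbook convex problem and the rest is bookkeeping with the definitions of $r^{\mathrm{comm}}$ and $r^{\mathrm{comp}}$. The only point requiring a little care is justifying that the budget and the LQR lower-bound constraints are both tight, which follows from strict monotonicity — I would state this explicitly rather than leave it implicit. A minor check is that $(B^u)^*,(B^d)^*>0$ so the interior stationary point is feasible, which is immediate since $r^u,r^d,\rho>0$.
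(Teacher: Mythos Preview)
Your proposal is correct and route (i) is exactly the paper's proof: the paper argues the bandwidth constraint is tight by monotonicity, substitutes $B^d=B_{\max}-B^u$, sets the first derivative to zero to obtain \eqref{opt-B}, and back-substitutes into \eqref{19} and \eqref{f6} to get \eqref{22a} and \eqref{f15}. Your alternative Cauchy--Schwarz route (ii) and the explicit checks on tightness, interior feasibility, and the stability case split are nice additions that the paper leaves implicit, but the core argument is the same.
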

	\begin{proof}
		See Appendix B.
	\end{proof}
	From \eqref{24a}, we can see that the closed-loop SE is jointly determined by the UL\&DL SEs.
	It has the following approximation,
		\begin{equation}
			\label{24}
			r^{\text{comm}}=\left\{
			\begin{aligned}
				&\min(\rho r^u, r^d), \quad \quad\rho r^u>> r^d \ \text{or} \ \rho r^u<< r^d\\
				&\frac{1}{4} \rho r^u,\quad \quad \quad \ \quad \quad r^u\simeq r^d.
			\end{aligned}
			\right.
		\end{equation}
	In fact, $\rho r^u$ and $r^d$ represent the task-level SE of UL\&DL, respectively. From \eqref{24}, we can learn that closed-loop SE is determined by the weak link. A task-level balance of UL\&DL SEs is required to ensure a large closed-loop SE.


	In addition, from \eqref{22a}, we observe an interesting trade-off between bandwidth and CPU frequency for a given closed-loop performance. This trade-off is achieved by adjusting communication and computing time. For instance, increasing the bandwidth reduces communication time, $\frac{1}{(B_{\max}+\Delta B)r^{\text{comm}}}$, which in turn allows for an increase in computing time, $\frac{1}{(f_{\max}-\Delta f)r^{\text{comp}}}$, leading to a reduction of the CPU frequency. Consequently, the relationship between $\Delta B$ and $\Delta f$ can be derived to illustrate how bandwidth and CPU frequency can be exchanged for a given closed-loop performance:
	\begin{equation}
		\label{interchange}
		\begin{aligned}
			&\frac{1}{r^{\text{comm}}(B_{\max}+\Delta B)}+\frac{1}{r^{\text{comp}}(f_{\max}-\Delta f)}\\
			&\quad \quad \quad \quad \quad \quad \quad \quad =\frac{1}{r^{\text{comm}}B_{\max}}+\frac{1}{r^{\text{comp}}f_{\max}},\\
			\Rightarrow \ \ &\Delta B=\frac{B_{\max}}{\frac{r^{\text{comp}}}{r^{\text{comm}}}(\frac{f_{\max}^2}{\Delta fB_{\max}}-\frac{f_{\max}}{B_{\max}})-1}.
		\end{aligned}
	\end{equation}
From the above expression, we observe that $\Delta B$ is influenced by the working point of the $\mathbf{SC}^3$ loop, characterized by the bandwidth, CPU frequency, and SE-to-CE ratio, $(B_{\max}, f_{\max}, \frac{r^{\text{comm}}}{r^{\text{comp}}})$. With a fixed $\Delta f$, a large bandwidth is required when $B_{\max}$ is high, $f_{\max}$ is low, and $\frac{r^{\text{comm}}}{r^{\text{comp}}}$ is high, which corresponds to the communication-saturated region.

	\subsection{Multi-$\mathbf{SC}^3$-Loop Optimization}
	Based on the optimal intra-loop configuration, we move to the inter-loop allocation  problem. Using the optimal LQR cost expression \eqref{f15}, (P1) is simplified into the following inter-loop bandwidth and CPU frequency allocation problem:
	\begin{subequations}
	\begin{align}
		\mbox{(PB-1)} \ \ &\min\limits_{\mathcal{B},\mathcal{F}} \sum_{k=1}^Kl_k \\
		\text{s.t.} \ &l_k \geqslant \frac{n N \!\left( \mathbf{v}_k\right)|\det \mathbf{M}_k|^\frac{1}{n}} {2^{\frac{2}{n}(D^{\mathbf{SC}^3}_k-\log_2|\det \mathbf{A}_k|)}-1}+\text{tr}\left( \mathbf{\Sigma}_{\mathbf{v}_k}\mathbf{S}_k\right), \forall k \label{26b} \\
		&D^{\mathbf{SC}^3}_k\leqslant \frac{T_k}{\frac{1}{ B_kr_k^{\text{comm}}}+\frac{1}{f_k r_k^{\text{comp}}}}, \forall k \label{2-SC3}\\
		&D^{\mathbf{SC}^3}_k> \log_2|\det \mathbf{A}_k|,  \forall k \label{27d}\\
		&\sum\limits_{k=1}^KB_k\leqslant B_{\max}\\
		&\sum\limits_{k=1}^Kf_k\leqslant f_{\max} \\
		&B_k\geqslant 0 \ f_k\geqslant 0, \forall k, \label{27g}
	\end{align}
\end{subequations}
where $B_k$ denotes the bandwidth allocated to the $k$-th $\mathbf{SC}^3$ loop and we denote $\mathcal{B}=\{B_k\}_{k=1}^K$. By calculating the second-order derivative of the  right-hand side of \eqref{26b}, it is easy to prove that it is a convex expression of $D_k^{\mathbf{SC}^3}$. Thus, the only non-convex constraint in (PB-1) is \eqref{2-SC3}. To address this, we first transform it into
	\begin{equation}
	\frac{\frac{1}{ B_kr_k^{\text{comm}}}+\frac{1}{f_k r_k^{\text{comp}}}}{T_k}\leqslant \frac{1}{D^{\mathbf{SC}^3}_k} \label{48}.
	\end{equation}
	Since $[\frac{1}{x}]$ is a convex expression of $x$ when $x>0$, we can see that only the right-hand side of \eqref{48} does not satisfy the rule of convex optimization.  By using the successive convex approximation, we propose an iterative algorithm. The convex optimization problem in the $s$-th iteration is given by
		\begin{subequations}
			\label{P3}
		\begin{align}
			\mbox{(PB-2)}& \ \min\limits_{\mathcal{B},\mathcal{F}} \sum_{k=1}^Kl_k \\
			\text{s.t.} \ &l_k \geqslant \frac{n N \!\left( \mathbf{v}_k\right)|\det \mathbf{M}_k|^\frac{1}{n}} {2^{\frac{2}{n}(D^{\mathbf{SC}^3}_k-\log_2|\det \mathbf{A}_k|)}-1}+\text{tr}\left( \mathbf{\Sigma}_{\mathbf{v}_k}\mathbf{S}_k\right), \forall k \label{2-a1} \\
			&\frac{\frac{1}{ B_kr_k^{\text{comm}}}+\frac{1}{f_k r_k^{\text{comp}}}}{T_k}	\leqslant \frac{2(D^{\mathbf{SC}^3}_k)^{s-1}-D^{\mathbf{SC}^3}_k}{\big((D^{\mathbf{SC}^3}_k)^{s-1}\big)^2},  \forall k \label{iterative}\\
			&\eqref{27d}-\eqref{27g}, \nonumber
		\end{align}
	\end{subequations}
	where the right side of \eqref{iterative} is the Taylor expansion of  $\big[\frac{1}{D^{\mathbf{SC}^3}_k}\big]$ at the result obtained in the $(s-1)$-th iteration, $(D^{\mathbf{SC}^3}_k)^{s-1}$.
	We summarize the proposed iterative algorithm in {\bf{Algorithm 1}}. The convergence of the algorithm is proved in Appendix C.
		 \begin{algorithm}[t]
		\caption{Iterative Algorithm for Goal-Oriented Closed-Loop Optimization Scheme}  \small
		\label{Tab1}
		\begin{algorithmic}[1]
			\REQUIRE
			{The number of the $\mathbf{SC}^3$ loops, $K$, and the iteration terminating threshold, $\delta$. \\
			Control related parameters: $n$, $\log_2|\det\mathbf{A}_k|$,  $\mathbf{B}_k$, $\mathbf{Q}_k$, $\mathbf{R}_k$, $T_k$, and $\Sigma_{\mathbf{v}_k}$;\\
			Communication related parameters: $r^u_k$, $r^d_k$, and $B_{\max}$;\\
			Computing related parameters: $\alpha_k$, $\rho_k$,  $f_{\max}$;}
			\STATE  Calculate $\mathbf{S}_k$ and $\mathbf{M}_k$ according to \eqref{Riccati};
			\STATE Calculate the closed-loop SE and CE  for each $\mathbf{SC}^3$ loop according to \eqref{closed-rate};
			\STATE \emph{Initialization}: $s=0$ and $(D^{\mathbf{SC}^3}_k)^{0}=\log_2|\det\mathbf{A}_k|+1, \forall k$;
			\REPEAT
			\STATE  $s=s+1$;
			\STATE  Solve (PB-2) to obtain $(B_k)^s$, $(f_k)^s$, and $(D^{\mathbf{SC}^3}_k)^s$;
			\UNTIL{$\frac{\abs{\sum\limits_{k=1}^K(l_k)^{s}-\sum\limits_{k=1}^K(l_k)^{s-1}}}{\sum\limits_{k=1}^K(l_k)^{s-1}}\leqslant \delta$ for (P3)}
			\STATE Calculate the optimal UL\&DL bandwidth allocation for each $\mathbf{SC}^3$ loop according to \eqref{opt-B};
			\STATE  Calculate the optimal time allocation for each $\mathbf{SC}^3$ loop according to \eqref{opt};
			\ENSURE The UL\&DL bandwidth,  time, and CPU frequency of $K$ $\mathbf{SC}^3$ loops: $(\mathcal{B}^{u})^*$, $(\mathcal{B}^{d})^*$, $(\mathcal{T}^{u})^*$, $(\mathcal{T}^{d})^*$, and $(\mathcal{F})^*$.
				\end{algorithmic}
	\end{algorithm}

Next, we derive the approximate closed-form solution for the inter-loop bandwidth allocation. We assume that the CPU frequency is sufficiently adequate such that the computing time becomes negligible compared to the communication time, $\frac{1}{B_k r_k^{\text{comm}}} + \frac{1}{f_k r_k^{\text{comp}}} \approx \frac{1}{B_k r_k^{\text{comm}}}$. On this basis, the LQR cost can be approximated as follows:
\begin{equation}
	\label{approx}
	l_k \approx \frac{n N \!\left( \mathbf{v}_k\right)|\det \mathbf{M}_k|^\frac{1}{n}} {2^{\frac{2}{n}(T_kB_kr_k^{\text{comm}}-\log_2|\det \mathbf{A}_k|)}}+\text{tr}\left( \mathbf{\Sigma}_{\mathbf{v}_k}\mathbf{S}_k\right),
\end{equation}
where the term [$-1$] in the denominator is omitted from its original expression \eqref{f15}, under the assumption that the system operates in the assured-to-be-stable region, i.e., $T_k B_k r_k^{\text{comm}} \gg \log_2|\det \mathbf{A}_k|$. Using \eqref{approx}, (PB-1) is simplified into a convex bandwidth allocation problem,
	\begin{subequations}
	\begin{align}
		\mbox{(PB-3)} \  \min\limits_{\mathcal{B}} \ &\sum_{k=1}^K\frac{n N \!\left( \mathbf{v}_k\right)|\det \mathbf{M}_k|^\frac{1}{n}} {2^{\frac{2}{n}(T_kB_kr_k^{comm}-\log_2|\det \mathbf{A}_k|)}}\\
		\text{s.t.}
		&\sum\limits_{k=1}^KB_k\leqslant B_{\max}  \label{B24}\\
		&B_k\geqslant 0,\forall k,
	\end{align}
\end{subequations}
where $\text{tr}\left( \mathbf{\Sigma}_{\mathbf{v}_k}\mathbf{S}_k\right)$ is omitted as it does not influence the allocation results. The stable condition is also omitted as in the assure-to-be-stable region. 
\begin{theorem}
The optimal bandwidth allocation to (PB-3) is given by
\begin{equation}
	\label{q1}
	\begin{aligned}
	&(B_k)^*=\frac{n}{2r^{\text{comm}}_kT_k} \times \\
	&\bigg(\frac{\sum\limits_{i=1,i\neq k}^{K}\frac{n}{2r^{\text{comm}}_iT_i}[(e_k-e_i)+\log_2(\frac{r^{\text{comm}}_kT_k}{r^{\text{comm}}_iT_i})]+B_{\max}}{\sum\limits_{i=1}^{K}\frac{n}{2r^{\text{comm}}_iT_i}}\bigg),
	\end{aligned}
\end{equation}
where $e_k$ denotes the control-related parameter,
\begin{equation}
	e_k\triangleq \log_2(N(\mathbf{v}_k))+\frac{2}{n}\log_2|\det\mathbf{A}_k|.
\end{equation}
\end{theorem}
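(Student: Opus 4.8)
The plan is to solve the convex problem (PB-3) by Lagrangian duality. First I would rewrite the objective: since the term $\operatorname{tr}(\mathbf{\Sigma}_{\mathbf{v}_k}\mathbf{S}_k)$ has already been dropped, the cost for loop $k$ is $c_k\, 2^{-\frac{2}{n}(T_k B_k r_k^{\text{comm}} - \log_2|\det\mathbf{A}_k|)}$ with $c_k \triangleq nN(\mathbf{v}_k)|\det\mathbf{M}_k|^{1/n}$, which is strictly convex and decreasing in $B_k$ on $B_k \geq 0$. Because the objective is strictly decreasing in every $B_k$, the sum constraint \eqref{B24} is active at the optimum, $\sum_k B_k = B_{\max}$; I would note the nonnegativity constraints $B_k \geq 0$ and observe that under the assured-to-be-stable assumption (which places the operating point far from the boundary) the optimum is interior, so the $B_k \geq 0$ multipliers vanish and can be ignored. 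Then I form the Lagrangian $\mathcal{L} = \sum_k c_k 2^{-\frac{2}{n}(T_k B_k r_k^{\text{comm}} - \log_2|\det\mathbf{A}_k|)} + \mu\big(\sum_k B_k - B_{\max}\big)$.

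Next I would take $\partial \mathcal{L}/\partial B_k = 0$. Writing $2^{x} = e^{x\ln 2}$, differentiation gives $-\frac{2\ln 2}{n} T_k r_k^{\text{comm}} c_k 2^{-\frac{2}{n}(T_k B_k r_k^{\text{comm}} - \log_2|\det\mathbf{A}_k|)} + \mu = 0$. Solving for $B_k$: taking $\log_2$ of both sides isolates $T_k B_k r_k^{\text{comm}}$ as an affine function of $\log_2\big(\frac{2\ln 2}{n} T_k r_k^{\text{comm}} c_k / \mu\big)$, hence
\begin{equation}
	B_k = \frac{n}{2 r_k^{\text{comm}} T_k}\Big[\log_2\big(c_k\big) + \tfrac{2}{n}\log_2|\det\mathbf{A}_k| + \log_2\big(T_k r_k^{\text{comm}}\big) + \gamma\Big],
\end{equation}
where $\gamma = \log_2\!\big(\tfrac{2\ln 2}{n}\big) - \log_2\mu$ collects all the $k$-independent constants. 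Recognizing that $\log_2 c_k = \log_2(nN(\mathbf{v}_k)|\det\mathbf{M}_k|^{1/n})$, and that in \eqref{q1} the combination appearing is $e_k = \log_2(N(\mathbf{v}_k)) + \frac{2}{n}\log_2|\det\mathbf{A}_k|$ (the $\log_2(n|\det\mathbf{M}_k|^{1/n})$ piece being again absorbed into the $k$-independent constant once one assumes these are uniform across loops, or more precisely folded into $\gamma$ — I would state the precise assumption here), this is exactly of the form $B_k = \frac{n}{2 r_k^{\text{comm}} T_k}\big(e_k + \log_2(r_k^{\text{comm}} T_k) + \tilde\gamma\big)$ for a common constant $\tilde\gamma$.

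Finally I would pin down $\tilde\gamma$ by substituting back into the active constraint $\sum_k B_k = B_{\max}$. This yields a single linear equation in $\tilde\gamma$: $\sum_i \frac{n}{2 r_i^{\text{comm}} T_i}\big(e_i + \log_2(r_i^{\text{comm}} T_i) + \tilde\gamma\big) = B_{\max}$, so $\tilde\gamma = \big(B_{\max} - \sum_i \frac{n}{2 r_i^{\text{comm}} T_i}(e_i + \log_2(r_i^{\text{comm}} T_i))\big)\big/\sum_i \frac{n}{2 r_i^{\text{comm}} T_i}$. Substituting this $\tilde\gamma$ back into the expression for $B_k$ and regrouping the sums — pulling the $k$-th term out of $e_k + \log_2(r_k^{\text{comm}} T_k)$ so that only differences $(e_k - e_i)$ and ratios $\log_2(r_k^{\text{comm}}T_k / r_i^{\text{comm}}T_i)$ survive — reproduces \eqref{q1} after routine algebra. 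I would close by noting strict convexity of (PB-3) guarantees this stationary point is the unique global minimizer, and remarking that the bandwidth has a water-filling-like structure: loop $k$ gets a baseline share $\frac{n}{2r_k^{\text{comm}}T_k}$ times the residual budget, biased by how its control difficulty $e_k$ and its closed-loop throughput rate $r_k^{\text{comm}}T_k$ compare to the other loops.

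The main obstacle I anticipate is bookkeeping rather than conceptual: carefully tracking which multiplicative constants are genuinely $k$-dependent versus absorbable into $\tilde\gamma$, and in particular justifying the exact form of $e_k$ in \eqref{q1} — i.e., clarifying whether $n$ and $|\det\mathbf{M}_k|$ are assumed common across loops or whether \eqref{q1} is stated under a normalization. A secondary point worth a sentence is verifying the interior-optimum claim so that the $B_k \geq 0$ constraints can legitimately be dropped; if one wanted full rigor one would invoke KKT with the stable-region hypothesis $T_k B_k r_k^{\text{comm}} \gg \log_2|\det\mathbf{A}_k|$ to show each $(B_k)^* > 0$.
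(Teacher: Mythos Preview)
Your approach is essentially identical to the paper's Appendix~D: form the Lagrangian for the convex problem (PB-3), use KKT stationarity to express each $B_k$ as an affine function of $\log_2$ of the multiplier, enforce the active sum constraint $\sum_k B_k = B_{\max}$ to eliminate the multiplier, and back-substitute. Your bookkeeping concern about the $|\det\mathbf{M}_k|^{1/n}$ factor is well founded---the paper's intermediate expression for $(B_k)^*$ in terms of $\lambda$ silently omits it, so the stated $e_k$ matches the derivation only under the implicit normalization that $|\det\mathbf{M}_k|$ is common across loops.
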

\begin{proof}
	See Appendix D.
\end{proof}
We can learn from \eqref{q1} that bandwidth allocation is influenced by both the communication parameter, $r_k^{\text{comm}}$, and the control parameter, $e_k$. The $\mathbf{SC}^3$ loop with poor communication performance (smaller $r^{\text{comm}}$) or controlling a more unstable system (larger $e_k$) is allocated more bandwidth. This shows that this inter-loop bandwidth allocation is fairness-minded, which balances the control processes across different $\mathbf{SC}^3$ loops. 

\begin{figure*} [t]
	\centering
	\includegraphics[width=0.8\linewidth]{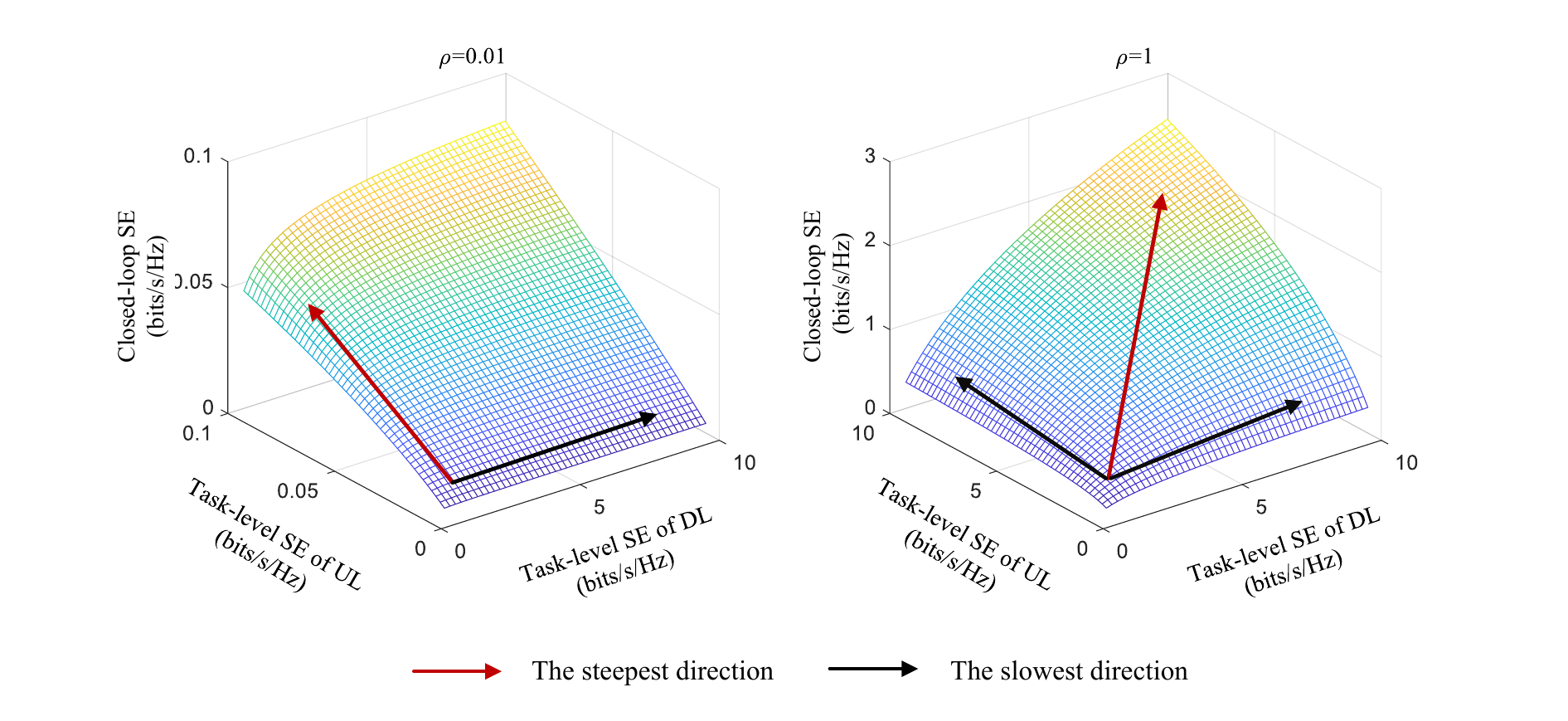}
	\caption{The closed-loop SE varying with the task-level SEs of UL\&DL. The left subfigure illustrates the case of imbalanced task-level SEs between UL\&DL, and the right figure illustrates the balanced case. The red arrow and black arrow indicate the steepest and slowest directions to improve the closed-loop SE. }
	\label{sim1}
\end{figure*}
\section{Simulation Results and Discussion}
\label{section 4}
In this section, we present the simulation results and discussion. We consider an unmanned robotic system composed of $K=4$ $\mathbf{SC}^3$ loops. The channel noise variance is $\sigma^2 = -107$ dBm \cite{fang2}, and the large-scale fading coefficients, $g_u$ and $g_d$, are calculated using the path-loss model given in \eqref{channel_gain}, where $f_c = 2$ GHz, and UL\&DL distances are set as $d^u_k = [1, 1.5, 2, 2.5]$ km  and $d^d_k = [3, 3.5, 4, 5]$ km, respectively. The small-scale fading is randomly generated following $\mathcal{CN}(0,1)$ \cite{fang2}. The UL\&DL transmit power is adjusted to maintain a constant received SNR, i.e., $\frac{|h^{u}_k|^2 p^{u}_k}{\sigma^2} = \frac{g^u_k}{\sigma^2}$ and $\frac{|h^{d}_k|^2 p^{d}_k}{\sigma^2} = \frac{g^d_k}{\sigma^2}$. Accordingly, we can calculate the UL\&DL SEs by \eqref{speed}, i.e,  $r^u_k = [10.5, 9.9, 9.5, 9.2]$ bits/s/Hz and $r^d_k = [12.2, 12.0, 11.8, 11.6]$ bits/s/Hz.
The computing-related parameters are set as $f_{\max} = 2$ GHz, $\rho_k = 0.01$, $\forall k$, and $\alpha_k = [100, 200, 1000, 50]$ (cycles/bit). The control-related parameters are configured as $n= 100$, $\log|\det \mathbf{A}_k| = [10, 20, 30, 40]$, $ \mathbf{R} = \mathbf{0}_{100}$, $\mathbf{Q} = \mathbf{I}_{100}$ \cite{Fang}, and $T_k=T= 10$ ms, $\forall k$. The iteration termination threshold is set as $\delta = 0.001$, and the optimization tool is CVX \cite{CVX}. Unless otherwise stated, these parameters are used throughout the results below.

\subsection{Single-$\mathbf{SC}^3$-Loop Simulation}
In Fig. \ref{sim1}, we present the relationships between UL\&DL SEs and the closed-loop SE, as described in \eqref{24}.
In the left figure, we set the information extraction ratio as $\rho=0.01$, which creates the unbalance of UL\&DL SEs at the task level. In this case, the closed-loop SE is constrained by the UL.  As indicated by the red and black arrows, the  most effective way to enhance the closed-loop SE is by improving the UL SE, while improving the DL SE provides little improvement.
In contrast, the right figure illustrates the case where the information extraction ratio is set as $\rho=1$, resulting in relatively balanced task-level UL\&DL SEs.  In this case, the most effective way to enhance the closed-loop SE is to simultaneously improve UL\&DL SEs.
It is noted that, in practice, the raw data transmitted via the UL often contain redundant and irrelevant information, leading the information extraction ratio a very small number, $\rho\ll 1$. An enhanced UL is usually required for keeping the task-level balance between UL\&DL.

In Fig. \ref{sim2}, we present the LQR cost under different UL\&DL configurations. Here, we compare the proposed scheme with two classical schemes: equal allocation and proportional allocation.
\begin{itemize}
	\item Equal allocation: The bandwidth is equally divided for UL\&DL like FDD, $B^d=B^u=\frac{B_{\max}}{2}$.
	\item Proportional allocation: The UL\&DL bandwidth is proportional allocated per the information extraction ratio, $B^u=\frac{1}{1+\rho}B_{\max}$ and $B^d=\frac{\rho}{1+\rho}B_{\max}$.
\end{itemize}
In above two schemes, the DL transmission time is set as $t^d=\frac{T}{3}$, and the remaining time is adjusted between UL and computing by letting the UL transmitted information rightly processed by computing, i.e., $t^u+ \frac{\alpha D^u}{f_{\max}}=\frac{2T}{3}$. The maximal bandwidth is set as $B_{\max}=500$ kHz, and the CPU frequency is set as $f_{\max}=0.5$ GHz. Fig. \ref{sim2} is a dual-axis chart with the left side representing the task-related information and the right side representing the LQR cost. From the bar, we can see that, under the proposed scheme, the UL\&DL are aligned to transmit the same amount of task-related information, while the other two schemes fail to achieve this balance. Their weak ULs constrain the control performance of the $\mathbf{SC}^3$ loop, leading to the high LQR cost (shown by the black curve). In addition, we can observe that the LQR cost is higher with more degree of UL\&DL imbalance, indicating the importance of keeping task-level UL\&DL balance within the $\mathbf{SC}^3$ loop.  

\begin{figure} [t]
	\centering
	\includegraphics[width=0.9\linewidth]{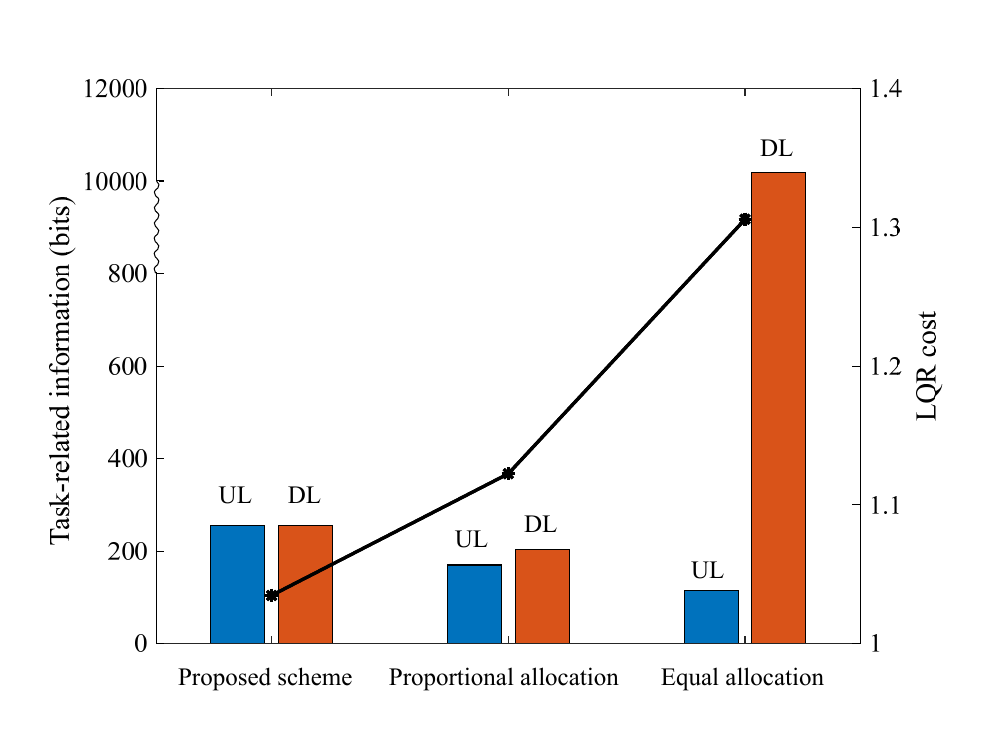}
	\caption{The task-related information and LQR cost under three UL\&DL configuration schemes.}
	\label{sim2}
\end{figure}

In Fig. \ref{simm}, we illustrate the interchange relationships between bandwidth and CPU frequency. This figure shows the required bandwidth in exchange for 1 MHz CPU frequency under different working points of the $\mathbf{SC}^3$ loop, described by the  bandwidth, CPU frequency, and SE-to-CE ratio, i.e., ($B$, $f$,$\frac{r^{\text{comm}}}{r^{\text{comp}}}$). The color gradient represents the amount of required bandwidth, increasing as the color shifts from blue to yellow.  The figure shows that in the communication-saturated region, more bandwidth is required compared to the communication-limited region. For example, at the working point of $(1 \text{MHz}, 2 \text{GHz}, 5 \times 10^4)$, the required bandwidth is $\Delta B = 12.6 \ \text{kHz}$, whereas at $(2 \text{MHz}, 1 \text{GHz},  10^5)$, the required bandwidth is $\Delta B = 500 \ \text{kHz}$, with an increase of forty-fold. This indicates the marginal effect, that the effectiveness of additional bandwidth diminishes as the $\mathbf{SC}^3$ loop approaches the communication-saturated region.
\begin{figure} [t]
	\centering
	\includegraphics[width=0.9\linewidth]{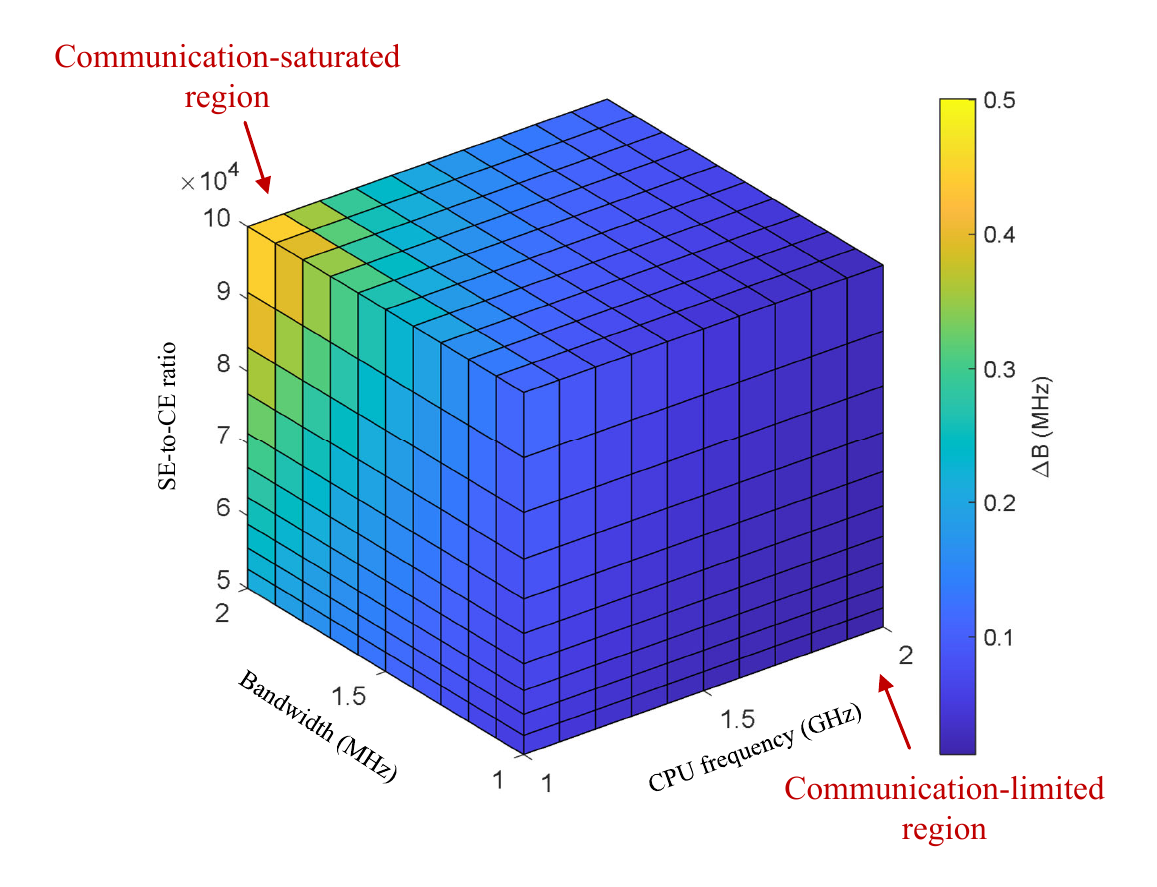}
	\caption{The required communication bandwidth to exchange $1$  MHz computing CPU frequency varies with ($B$,$f$,$\frac{r^{\text{comm}}}{r^{\text{comp}}}$).}
		\label{simm}
\end{figure}

\subsection{Multi-$\mathbf{SC}^3$-Loop Simulation}

\begin{figure} [t]
	\centering
	\includegraphics[width=0.9\linewidth]{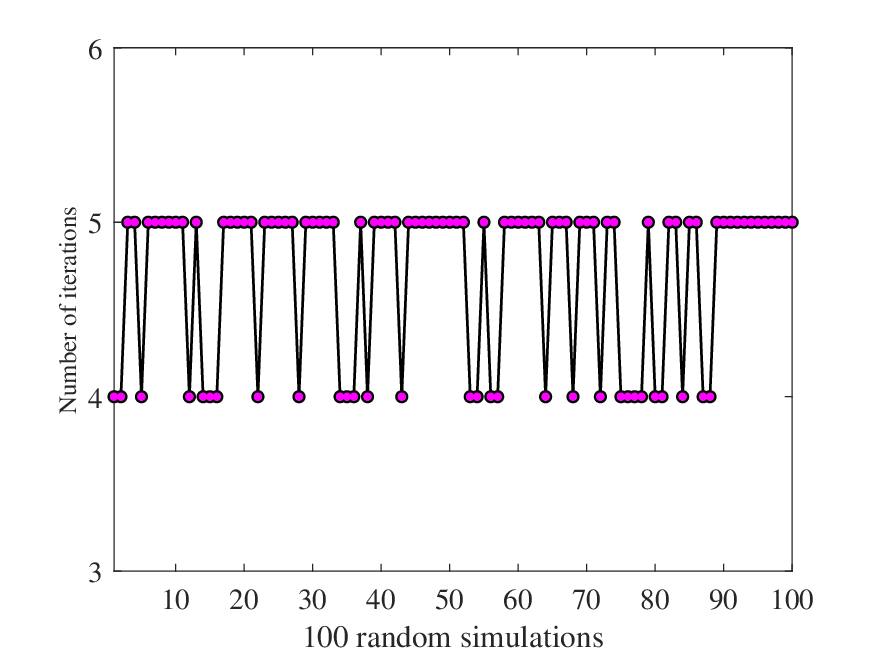}
	\caption{The number of iterations of the proposed iterative algorithm under 100 random simulations.}
	\label{fig2}
\end{figure}
In Fig. \ref{fig2}, we provide the convergence performance of the proposed iterative algorithm. In this simulation, we randomly generate UL\&DL distances within the range of $[0.5, 5]$ km, the maximum bandwidth is set as $B_{\max} = 1$ MHz. The results show that, across all 100 simulations, the  proposed algorithm consistently requires only 4 or 5 iterations to converge. 

\begin{figure} [t]
	\centering
	\includegraphics[width=0.9\linewidth]{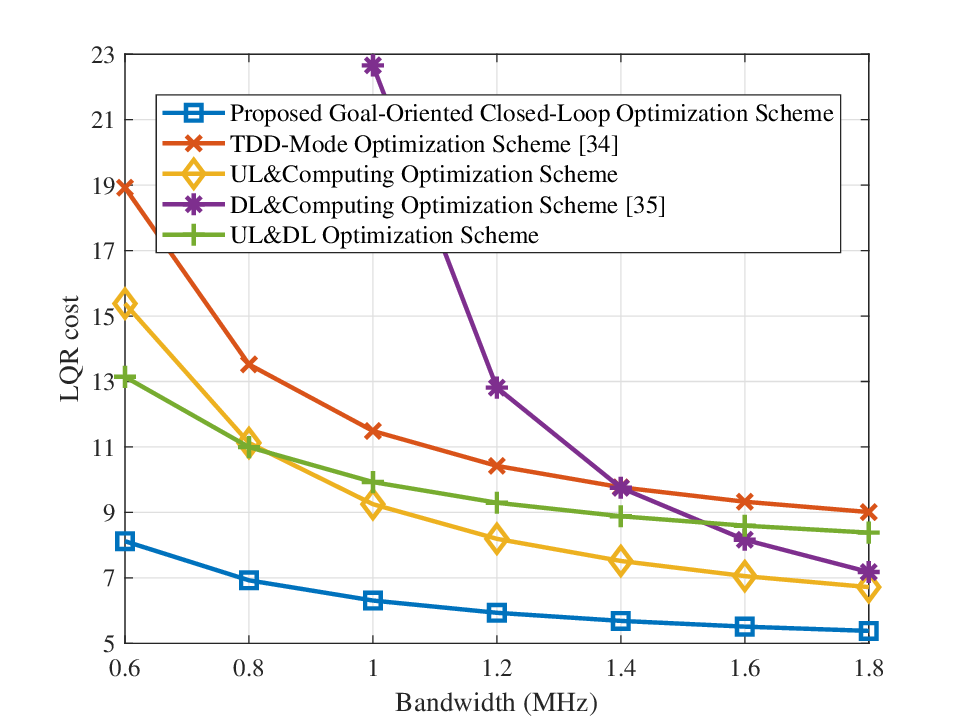}
	\caption{The LQR cost varies with the bandwidth under different allocation schemes.}
	\label{fig6}
\end{figure}
In Fig. \ref{fig6} we compare the proposed scheme with a TDD-mode optimization scheme \cite{Wen2} and three separate schemes, with each focused on a part of the $\mathbf{SC}^3$ loop. For fair comparisons, we consider these schemes to be goal-oriented by taking the LQR cost as the objective.
\begin{itemize}
	\item TDD-Mode Optimization Scheme \cite{Wen2}: This scheme uses TDD to access four $\mathbf{SC}^3$ loops. Each loop is allocated with an equal time slot,  $T_k=\frac{T}{4}=2.5$ ms, and exclusively uses bandwidth and CPU frequency resources in this slot. It optimizes the UL\&DL, and computing time for each $\mathbf{SC}^3$ loop.
	\item UL\&Computing Optimization Scheme: This scheme optimizes UL transmission and computing, while the DL parameters are fixed as $B^d_k=\frac{B_{\max}}{2K}, \forall k$ and $t^d_k=1 \ \text{ms}, \forall k$.
	\item DL\&Computing Optimization Scheme \cite{Jian}: This scheme
	optimizes DL transmission and computing, with the UL parameters fixed at $B^u_k=\frac{B_{\max}}{2K}, \forall k$ and $t^u_k=4 \text{ms}, \forall k$.
	\item UL\&DL Optimization Scheme: This scheme optimizes UL\&DL transmission, with the CPU frequency fixed at $f_k=\frac{f_{\max}}{K}, \forall k$.
\end{itemize}
It can be observed that the proposed scheme achieves the lowest LQR cost compared to the other four schemes.
In the bandwidth-limited region, the DL\&Computing Optimization Scheme \cite{Jian} performs the worst. This is because the fixed UL is the bottleneck, which largely determines the $\mathbf{SC}^3$-loop performance: the data transmitted via the UL are the data processed by computing, which in turn determines the accuracy of the command transmitted via the DL. Conversely, in the bandwidth-adequate region, the TDD-Mode Optimization Scheme and UL\&DL Optimization Scheme perform poorly. This is because these two schemes constrain the inter-loop adjustment, negatively affecting the balance of the control progress across different $\mathbf{SC}^3$ loops. This highlights the advantage of treating the $\mathbf{SC}^3$ loop as an integrated structure and jointly configuring resources within and across the $\mathbf{SC}^3$ loops.

\begin{figure} [t]
	\centering
	\includegraphics[width=0.9\linewidth]{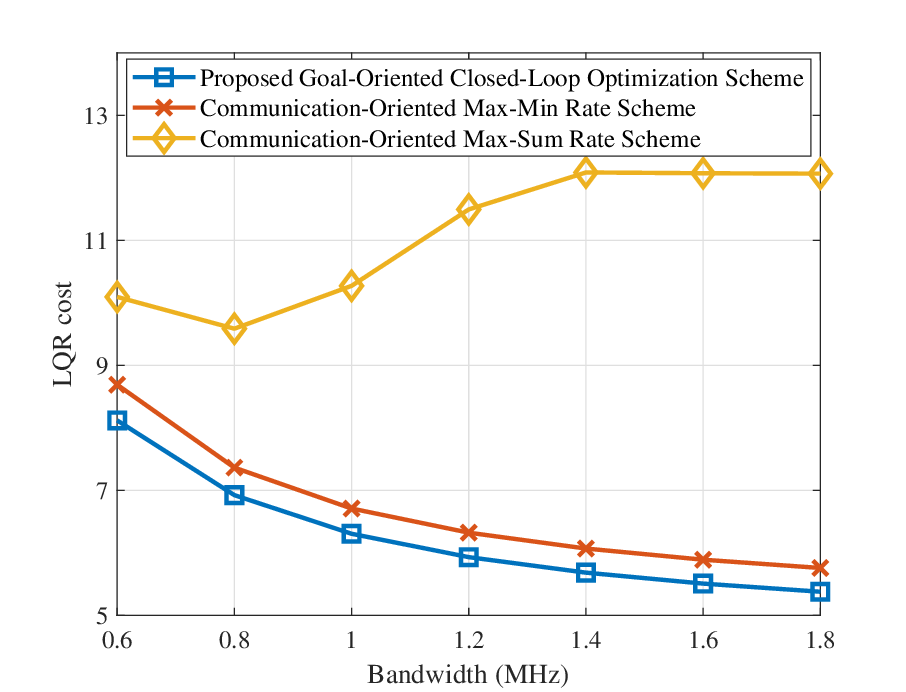}
	\caption{The LQR cost varies with the bandwidth under the goal-oriented and communication-oriented schemes.}
	\label{fig3}
\end{figure}
In Fig. \ref{fig3}, we compare the proposed goal-oriented scheme with two communication-oriented schemes. 
\begin{itemize}
	\item Communication-Oriented Max-Sum Rate Scheme: This scheme maximizes the sum of the closed-loop information, i.e., $\max\sum\limits_{k=1}^K{D^{\mathbf{SC}^3}_k}$.
	\item Communication-Oriented Max-Min Rate Scheme: This scheme maximizes the minimal  closed-loop information, i.e., $\max\min\limits_{k}{D^{\mathbf{SC}^3}_k}$.
\end{itemize}
These two communication-oriented schemes can be solved using a similar iterative algorithm by replacing the objective of the LQR cost  with the sum-rate and min-rate objective, i.e., $\max\sum\limits_{k=1}^K D^{\mathbf{SC}^3}_k$ and $\max\min\limits_{k} D^{\mathbf{SC}^3}_k$, and convert the LQR constraint \eqref{2-a1} into a closed-loop-information constraint,
\begin{equation}
	\label{sim_con}
	D^{\mathbf{SC}^3}_k\geqslant \frac{n}{2}\log_2(\frac{n N \!\left( \mathbf{v_k}\right)|\det \mathbf{M}_k|^\frac{1}{n}}{l_k-\text{tr}(\mathbf{\mathbf{\Sigma}_k\mathbf{S}_k})}+1)+\log_2|\det\mathbf{A}_k|.
\end{equation}
In this simulation, we set the LQR cost requirement in \eqref{sim_con} as $l_k=5, \forall k$. From the figure we can see that, the Communication-Oriented Max-Sum Rate Scheme performs the worst. 
This is because this scheme disproportionately allocates the most resources to the $\mathbf{SC}^3$ loops with the highest SE and CE, amplifying the imbalance among different $\mathbf{SC}^3$ loops.
As the bandwidth increases, this bias intensifies, leading the LQR cost of other $\mathbf{SC}^3$ loops to rise sharply, resulting in the increase of the sum LQR cost.
In addition, we can observe that, when $B_{\max}\geqslant 1.4$ MHz, the LQR cost reaches its maximum and remains unchanged under the Communication-Oriented Max-Sum Rate Scheme. In this region, the control performance of the  $\mathbf{SC}^3$ loops diverges into two extremes: the LQR cost of the $\mathbf{SC}^3$ loops with the highest SE or CE approaches its minimal value, while the LQR cost of the remaining $\mathbf{SC}^3$ loops barely meets the basic requirement as given in \eqref{sim_con}.  Conversely, the Communication-Oriented Max-Min Rate Scheme performs similarly to the proposed scheme. They have a small performance gap, which is because the Max-Min Rate Scheme prioritizes the rate balance while overlooking the differences of controlled systems. The good performance of the Max-Min Rate Scheme and the proposed scheme indicates the importance of keeping balance across different $\mathbf{SC}^3$ loops.

\begin{figure} [t]
	\centering
	\includegraphics[width=0.9\linewidth]{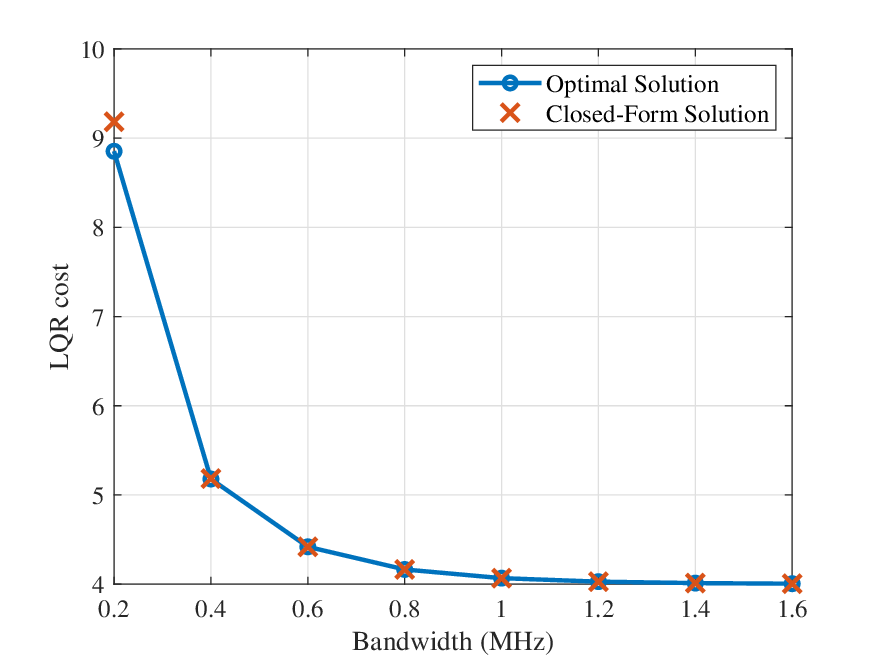}
	\caption{The LQR cost under the approximate closed-form solution given in \eqref{q1} and the optimal solution solved by CVX.}
	\label{fig4}
\end{figure}

Fig. \ref{fig4} compares the approximate closed-form solution for the inter-loop bandwidth allocation \eqref{q1} with the optimal solution obtained by CVX. We can see that the LQR costs under the approximate closed-form solution and the optimal solution exhibit a small gap in the bandwidth-limited region, and this gap rapidly approaches zero as the bandwidth increases. This demonstrates the accuracy of the approximate closed-form solution, particularly when the system has sufficient communication and computing resources.

Furthermore,  we show the bandwidth allocation principles by comparing the proposed goal-oriented scheme with two communication-oriented schemes in Fig. \ref{fig5}. The maximal bandwidth is set as $B_{\max}=1$ MHz. In the top subfigure, we set the four $\mathbf{SC}^3$ loops with different intrinsic entropy, $\log_2|\det\mathbf{A}_k| = [10, 20, 100, 200]$, and equal SE, $r^{\text{comm}}_k = 0.1$ bits/s/Hz, $\forall k$. It can be seen that the proposed scheme allocates more bandwidth to the $\mathbf{SC}^3$ loops with higher intrinsic entropy, whereas two communication-oriented schemes allocate bandwidth equally. This is because the communication-oriented schemes focus on the data rate, and as a result, they are unable to perceive the stability differences of the controlled systems. 
In the bottom subfigure, we set the four $\mathbf{SC}^3$ loops with different SEs, $r^{\text{comm}}_k=[0.08, 0.10, 0.12,0.14]$ bits/s/Hz, and equal intrinsic entropy, $\log_2|\det\mathbf{A}_k|=20, \ \forall k$. In this case, the Max-Sum Rate Scheme disproportionately allocates the most resources to the $\mathbf{SC}^3$ loop with the highest SE (loop 4), leaving other $\mathbf{SC}^3$ loops with minimal resources to satisfy the LQR-cost constraint \eqref{sim_con}. In contrast, both the proposed scheme and the Max-Min Rate Scheme allocate more bandwidth to the $\mathbf{SC}^3$ loops with lower SEs, aiming to achieve the task-level and rate-level balance, respectively.
This shows that the proposed scheme is fairness-minded. Instead of over-resourcing the strong $\mathbf{SC}^3$ loop at the expense of others, it maintains the task-level balance across $\mathbf{SC}^3$ loops to ensure the control performance of the whole unmanned robotic system.

\begin{figure} [t]
	\centering
	\includegraphics[width=1\linewidth]{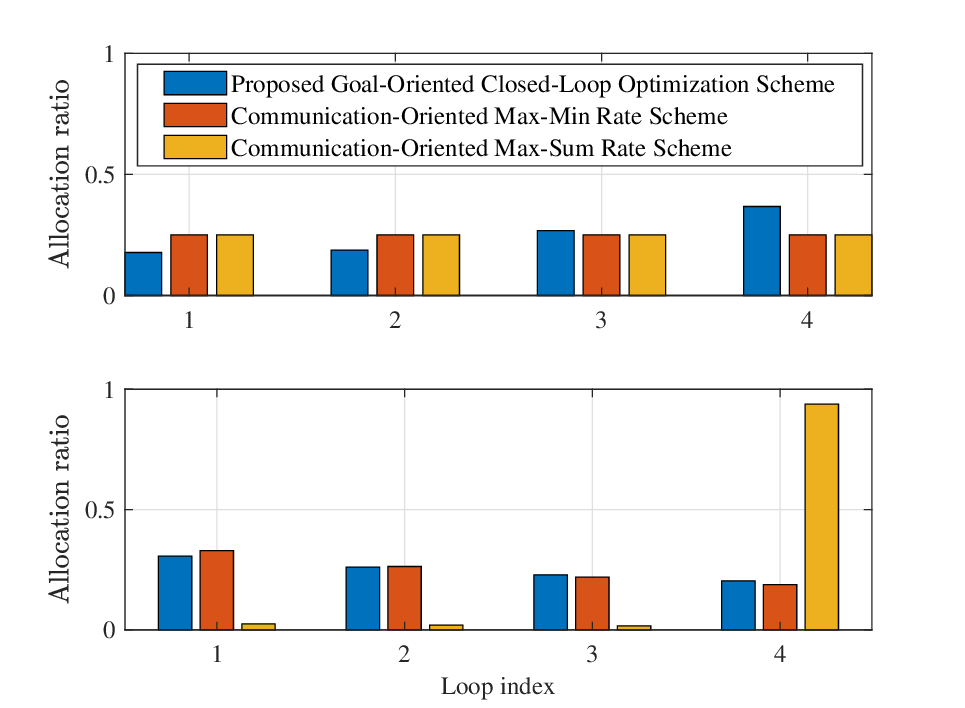}
	\caption{The bandwidth allocation ratio under three different schemes. The top subfigure shows the bandwidth allocation ratio under different intrinsic entropy and the bottom subfigure shows the bandwidth allocation ratio under different closed-loop SEs.}
	\label{fig5}
\end{figure}

%
%
%
%
%
\section{Conclusions}
\label{section 5}
In this paper, we have investigated an unmanned robotic system formed by the reflex-arc-like $\mathbf{SC}^3$ loops. We have proposed a goal-oriented closed-loop optimization scheme that jointly configures UL\&DL bandwidth, time, and CPU frequency to improve the control performance of the system. We have derived closed-form solutions for the UL\&DL bandwidth and time allocation within the $\mathbf{SC}^3$ loop and proposed an iterative algorithm to optimize the inter-loop bandwidth and CPU frequency allocation. Under the condition of adequate CPU frequency availability, we have also derived the approximate closed-form solution for the inter-loop bandwidth allocation.
We have shown that the superiority of the proposed scheme lies in achieving a two-tier task-level balance within and across the $\mathbf{SC}^3$ loops. We believe that such a structured design—taking the $\mathbf{SC}^3$ loop as an integrated structure—will promote the development of unmanned robotic systems in 6G.
\appendices

\section{Proof of \bf{Lemma 1}}
In (PA-2),  the optimal values of $D^{\mathbf{SC}^3}$, $D^u$, and $D^d$ are their maximum value allowed. As a result, the equality in constraints \eqref{16b}--\eqref{16f} must hold at the optimal solution,
\begin{equation}
(D^{\mathbf{SC}^3})^*=\min\{\rho (D^u)^*,(D^d)^*\}
\end{equation}
where
\begin{equation}
	(D^u)^*=t^uB^ur^u, \
	(D^d)^*=t^dB^dr^d.
\end{equation}
If $\rho (D^u)^* \neq (D^d)^*$ at the optimal solution, taking $\rho (D^u)^* > (D^d)^*$ as an example, we thus have $(D^{\mathbf{SC}^3})^*=\min\{\rho (D^u)^*,(D^d)^*\}=(D^d)^*$. In this case, $(D^d)^*$ can be improved by reallocating UL time to the DL as long as $\rho (D^u)^* > (D^d)^*$. This improves the objective by
\begin{equation}
t^d\uparrow \ \rightarrow  (D^d)^*\uparrow \rightarrow   (D^{\mathbf{SC}^3})^*\uparrow,
\end{equation}
which contradicts the assumption that $(D^{\mathbf{SC}^3})^*$ is optimal. Similarly, if $\rho (D^u)^* < (D^d)^*$, we can use the similar reasoning to show that $(D^{\mathbf{SC}^3})^*$ can be improved. Therefore, neither $\rho (D^u)^* > (D^d)^*$ nor $\rho (D^u)^* < (D^d)^*$ hold, meaning that at the optimal solution $\rho (D^u)^*= (D^d)^*$.

Furthermore, based on $\rho (D^u)^*= (D^d)^*$, we have $\rho t^uB^ur^u= t^dB^dr^d$.
The optimal time for DL transmission and computing can be expressed as a function of the optimal time for UL transmission,
\begin{equation}
	\label{clc2}
	(t^d)^*=\frac{\rho (t^u)^*B^ur^u}{B^dr^d}, \  (t^{\text{comp}})^*=\frac{\alpha (t^u)^*B^ur^u}{f}.
\end{equation}
In addition, since more time enables transmitting and processing more information, thereby increasing $D^{\mathbf{SC}^3}$, the cycle-time constraint \eqref{16e} must be satisfied as an equality at the optimal solution,
\begin{equation}
	\label{clc1}
	(t^u)^*+\frac{\alpha D^u}{f}+(t^d)^*=T.
\end{equation}
Then by substituting \eqref{clc2} into \eqref{clc1}, we obtain the optimal expression of the UL transmission time,
\begin{equation}
	\label{36}
	\begin{aligned}
		&(t^u)^*+\frac{\alpha t^uB^ur^u}{f}+\frac{\rho (t^u)^*B^ur^u}{B^dr^d}=T &\Rightarrow \\
		&(t^u)^*=\frac{\frac{1}{\rho B^ur^u}}{\frac{1}{\rho B^ur^u}+\frac{\alpha}{\rho f}+\frac{1}{B^dr^d}}T.
	\end{aligned}
\end{equation}
Based on $(t^u)^*$, we can further obtain $(t^d)^*$ and $(t^{\text{comp}})^*$ by substituting \eqref{36} into \eqref{clc2}. 

\section{Proof of \bf{Theorem 1}}
	In (PA-3), the objective is monotonically decreasing with the UL\&DL bandwidth. Therefore, the bandwidth constraint \eqref{20b} must be satisfied as an equality at the optimal solution,
	\begin{equation}
	(B^u)^*+(B^d)^*=B_{\max}. \label{39}
	\end{equation}
	We let $(B^d)^*=B_{\max}-(B^u)^*$ and substitute it to the objective. By calculating the first-order derivative of the objective and setting it to zero, we derive the optimal closed-form solution for the bandwidth allocation as follows:
	\begin{equation}
		\begin{aligned}
			&\frac{\partial}{\partial B^u}\bigg[\frac{1}{\rho B^ur^u}+\frac{\alpha}{\rho f_{\max}}+\frac{1}{(B_{\max}-B^u)r^d}\bigg]=0 \\
			\Rightarrow &-\frac{1}{\rho (B^u)^2r^u}+\frac{1}{(B_{\max}-B^u)^2r^d}=0 \\
			\Rightarrow &(B^u)^*=\frac{\sqrt{r^d}B_{\max}}{\sqrt{\rho r^u}+\sqrt{r^d}}.
		\end{aligned}
	\end{equation}
	Then, it is easy to further get $(B^d)^*$ according to \eqref{39}. By substituting  $(B^u)^*$ and $(B^d)^*$ into \eqref{19} and \eqref{f6}, we further get the optimal expression of the closed-form information, $(D^{\mathbf{SC}^3})^*$,  and LQR cost, $(l)^*$.

\section{Proof of the convergence of \bf{Algorithm 1}}
We prove the convergence of the proposed algorithm by proving this optimization process leads to a non-increasing objective. We first prove that the solution in the $(s-1)$-th iteration is a feasible solution in the $s$-th iteration.

From convenience, we introduce a function to express the constraint \eqref{iterative} in (PB-2),
\begin{equation}
G(B_k,f_k,D^{\mathbf{SC}^3}_k)\triangleq \frac{\frac{1}{B_kr^{\text{comm}}}+\frac{1}{ f_kr^{\text{comp}}}}{T_k}-\frac{1}{D^{\mathbf{SC}^3}_k}.
\end{equation}
Denote the results obtained  in the $s$-th iteration as $\{(B_k)^s,(f_k)^s,(D^{\mathbf{SC}^3}_k)^s, (l_k)^s, \forall k\}$. The constraint  \eqref{iterative} in the $s$-th iteration can be expressed as
\begin{equation}
G(B_k,f_k,D^{\mathbf{SC}^3}_k|(D^{\mathbf{SC}^3}_k)^{s-1})\geqslant 0,
\end{equation}
where $G(B_k,f_k,D^{\mathbf{SC}^3}_k|(D^{\mathbf{SC}^3}_k)^{s-1})$ represents the Taylor expansion of $G(B_k,f_k,D^{\mathbf{SC}^3}_k)$ at $(D^{\mathbf{SC}^3}_k)^{s-1}$.
Since $[-\frac{1}{x}]$ is a concave expression for $x$ when $x>0$, its Taylor expansion is greater than or equal to the original value. Therefore, we have the following  inequalities:
\begin{equation}
	\begin{aligned}
		&G((B_k)^{s-1},(f_k)^{s-1},(D^{\mathbf{SC}^3}_k)^{s-1}|(D^{\mathbf{SC}^3}_k)^{s-1})\\
		=&G((B_k)^{s-1},(f_k)^{s-1},(D^{\mathbf{SC}^3}_k)^{s-1}) \\
		\geqslant&G((B_k)^{s-1},(f_k)^{s-1},(D^{\mathbf{SC}^3}_k)^{s-1}|(D^{\mathbf{SC}^3}_k)^{s-2})\\
		 \geqslant &0.
	\end{aligned}
\end{equation}
This shows that  $\{(B_k)^{s-1},(f_k)^{s-1},(D^{\mathbf{SC}^3}_k)^{s-1}\}$ is a feasible solution in the $s$-th iteration. As a result, $\sum\limits_{k=1}^{K} (l_k)^{s-1}$ is an achievable objective in the $s$-th iteration. Since in the $s$-th iteration, $\{(B_k)^s,(f_k)^s,(D^{\mathbf{SC}^3}_k)^s\}$ is the optimal solution, it satisfies that $\sum\limits_{k=1}^{K} (l_k)^s\leqslant\sum\limits_{k=1}^{K} (l_k)^{s-1}$. According to the monotone bounded theorem, the proposed iterative algorithm is assured to be  convergent.

\section{Proof of \bf{Theorem 2}}
(PB-3) is a convex optimization problem and satisfies the Slater condition, which guarantees strong duality. As a result, the optimal solution to the dual problem is the same as that of the primal problem. Therefore, we solve the dual problem of (PB-3) to obtain the closed-form solution for the bandwidth. The dual problem is formulated as follows:
	\begin{subequations}
		\begin{align}
			\mbox{(PB-4)} \  \max\limits_{\lambda} \min\limits_{\mathcal{B}} &\sum_{k=1}^K\frac{n N \!\left( \mathbf{v}_k\right)|\det \mathbf{M}_k|^\frac{1}{n}} {2^{\frac{2}{n}(T_kB_kr_k^{\text{comm}}-\log_2|\det \mathbf{A}_k|)}} \nonumber\\
			&\quad \quad +\lambda(\sum\limits_{k=1}^KB_k-B_{\max})\\
			\text{s.t.} \ \
			&\lambda\geqslant 0,  \label{B34}
		\end{align}
	\end{subequations}
	where $\lambda$ is the Lagrange multiplier. The Karush-Kuhn-Tucker (KKT) condition of (PB-4) is given by
	\begin{subequations}
	\begin{align}
			& \frac{\partial}{\partial B_k}\bigg[\sum\limits_{k=1}^K\frac{n N \!\left( \mathbf{v}_k\right)|\det \mathbf{M}_k|^\frac{1}{n}}  {2^{\frac{2}{n}(T_kB_kr_k^{\text{comm}}-\log_2|\det \mathbf{A}_k|)}} \nonumber\\
			&\quad \quad \quad \quad \quad +\lambda(\sum\limits_{k=1}^KB_k-B_{\max})\bigg]=0 \label{B41}\\
			&\sum\limits_{k=1}^KB_k-B_{\max}= 0 \label{B43}\\
			&\lambda(\sum\limits_{k=1}^KB_k-B_{\max})=0\\
			&\lambda\geqslant 0,
		\end{align}
	\end{subequations}
	where the constraint \eqref{B43} is an equality since the available bandwidth must be fully utilized at the optimal solution.
	By calculating \eqref{B41}, the optimal bandwidth is expressed as a function of $\lambda$,
	\begin{equation}
		\label{B42}
		(B_k)^*=\frac{n}{2T_kr_k^{\text{comm}}}\log_2(\frac{ 2\mbox{In}2N(\mathbf{v}_k)|\det\mathbf{A}_k|^{\frac{2}{n}}T_kr_k^{comm}}{\lambda}).
	\end{equation}
 We further substitute \eqref{B42} into \eqref{B43} to get the following expression of the Lagrange multiplier,
	\begin{equation}
		\label{B44}
		\begin{aligned}
			&\log_2(\lambda)=\\
			&\frac{\sum\limits_{i=1}^{K}\frac{n}{2T_ir_i^{\text{comm}}}\log_2( 2\text{In}2N(\mathbf{v}_i)|\det\mathbf{A}_i|^{\frac{2}{n}}T_ir_i^{\text{comm}})-B_{\max}}{\sum\limits_{i=1}^{K}\frac{n}{2r_i^{\text{comm}}T_i}}.
		\end{aligned}
	\end{equation}
By back-substituting the above expression into \eqref{B42}, we thus obtain the closed-form solution for the inter-loop bandwidth allocation, as given in \eqref{q1} in {\bf{Theorem 2}}.

\end{document}